\titleformat*{\section}{\large\bfseries}
\newcommand{\subtau}{\hat{\tau}}
\newcommand{\subbeta}{\hat{\beta}}
\newcommand{\hatb}{\hat{\beta}}
\newcommand{\by}{\underline{y}}
\newcommand{\RR}{\mathbb{R}}
\newcommand{\EE}{\mathbb{E}}
\newcommand{\PP}{\mathbb{P}}
\DeclareMathOperator{\sgn}{sgn}
\newcommand{\ind}[1]{\mathbf{1}_{#1}}
\newcommand{\comments}[1]{}
\newtheorem{lemma}{Lemma}
\newtheorem{theorem}{Theorem}
\newtheorem{corollary}{Corollary}
\theoremstyle{definition}
\newtheorem{example}{Example}
\theoremstyle{remark} 
\newtheorem{remark}{Remark}
\title{Identifying Present-Bias from the Timing of Choices\footnote{We thank Ned Augenblick, Stefano DellaVigna, Ori Heffetz, Botond K\H{o}szegi, Muriel Niederle, Charles Sprenger, Dmitry Taubinsky, and Florian Zimmermann for insightful and encouraging comments. Part of the work on this paper was carried out while the authors visited {\it briq}, whose hospitality is gratefully acknowledged.}}
\date{\today}
\author{Paul Heidhues\\DICE \and Philipp Strack \\UC Berkeley}
\begin{document}

\maketitle

\begin{abstract}
Timing decisions are common: when to file your taxes, finish a referee report, or complete a task at work. We ask whether time preferences can be inferred when \textsl{only} task completion is observed. To answer this question, we analyze the following model: each period a decision maker faces the choice whether to complete the task today or to postpone it to later. Cost and benefits of task completion cannot be directly observed by the analyst, but the analyst knows that net benefits are drawn independently between periods from a time-invariant distribution and that the agent has time-separable utility. Furthermore, we suppose
the analyst can observe the agent's exact stopping probability. We establish that for any agent with quasi-hyperbolic $\beta,\delta$-preferences and given level of partial naivete $\hat{\beta}$, the probability of completing the task conditional on not having done it earlier increases towards the deadline. And conversely, for any given preference parameters $\beta,\delta$ and (weakly increasing) profile of task completion probability, there exists a stationary payoff distribution that rationalizes her behavior as long as the agent is either sophisticated or fully naive. An immediate corollary being that, without parametric assumptions, it is impossible to rule out time-consistency even when imposing an a priori assumption on the permissible long-run discount factor. We also provide an exact partial identification result when the analyst can, in addition to the stopping probability, observe the agent's continuation value. 

\end{abstract}

\pagebreak

\setcounter{page}{1}

\section{Introduction}

Intuition and evidence suggests that many individuals are time-inconsistent; at any particular point in time the (near) present gets an additional weight in intertemporal tradeoffs \citep[e.g.][]{strotz,frederickloewenstein2,augenblickniederle,augenblickrabin}. Especially when individuals fail to fully anticipate their predictable preference changes, such present-focused individuals tend to procrastinate \citep{akerlofobedience,odonoghuerabindoing,odonoghuerabinchoice}: they will often excessively delay the completion of tedious tasks such as filing taxes or paying parking-tickets. And when facing a gratifying  task---such as taking a day off---, present-focused individuals often precrastinate. To model the resulting interpersonal-conflict of preference changes in a simple and tractable way, \cite{laibson} adopted intergenerational discounting models \citep{phelpspollak} to individual decision-making. His quasi-hyperbolic discounting model captures the present-focus of individuals by introducing an additional present-bias parameter that discounts all future utility into \cite{samuelsondu}'s time-separable exponential-discounting model. \cite{odonoghuerabindoing,odonoghuerabinchoice} extend this framework by introducing (partial) naivete, and illustrating such individuals' tendency to delay unpleasent tasks. Since excessive procrastination is a robust prediction of (naive) hyperbolic discounting models, it seems natural to use task-completion data to identify time-inconsistent preferences from the pattern of completion times. In line with this idea, previous research classifies individuals as time-inconsistent if they complete tasks at or close to the deadline \citep{brownprevitero,frakeswasserman} or  estimates the degree of time-inconsistency from completion times under parametric assumptions \citep{martinezmeier}.\footnote{\cite{brownprevitero} classify individuals that select their health care plan close to the deadline as procrastinators and look for correlated behavior in other financial domains. \cite{frakeswasserman} investigate the behavior of patent officers that have to complete a given quota of applications supposing that the cost of working on a patent are deterministic and identical across days. In their model, for conventional discount rates the empirically observed bunching close to the deadline is inconsistent with exponential discounting. While earlier papers do not address the concern of unobservable and random opportunity cost, \cite{martinezmeier} allow for random opportunity costs and use a parametric approach to identify time preferences.}

In this paper, we ask whether time preferences can be inferred by an outside observer---referred to as the analyst---when \textsl{only} task completion is observed \textsl{absent} parametric assumptions on the (unobservable) cost and benefit of task completion. A key difficulty in doing so is to separate naivete or time-preference-based explanations of delay from those due to the option value of waiting \citep{wald,weisbrod,dixitpindyck}: whenever the cost of doing a certain task is stochastic, a time-consistent individual may wait in the hope of getting a lower cost draw tomorrow.\footnote{Throughout, we abstract from another reason that tasks may not be completed: forgetting. Conceptually, one can think of the agent in our analysis as getting a non-intrusive reminder at the beginning of every period. This is not to say that limited memory and the strategic response to it are unimportant in determining task completion behavior in the field. See, for example, \cite{heffetzodonoghue} for how reminders determine when parking fines are payed, \cite{altmannetraxler} for how deadlines and reminders determine the probability of making a check-up appointment at the dentist, and \cite{ericson2} for how time-inconsistency and limited memory interact. }

Section \ref{sec:setup} introduces our task-completion model. We consider an analyst who, from observing task completion times of a partially-naive quasi-hyperbolic discounter, tries to learn about some or all of the following parameters: the long-run discount factor $\delta$, the present-bias parameter $\beta$, or the degree of sophistication $\subbeta$. To facilitate learning by the analyst, we assume that the agent's task-completion payoffs are drawn each period from the same underlying payoff distribution. Absent any such a priori restriction, it is straightforward to rationalize any observed stopping behavior independently of the agent's taste for immediate gratification and degree of sophistication, leaving no hope for identification thereof.\footnote{For example, suppose in every period the cost of doing the task is either one or zero, allowing for time-varying probability that the cost are zero. Simply setting the probability that the cost are zero in each period equal to that period's observed task completion probability rationalizes the data for any time-separable utility function.} Furthermore, to make identification easier, we suppose that the analyst can observe the individual's exact stopping probability in each period. Intuitively, one may think of the analyst as having access to an ideal data set with (infinitely) many observations of either the same individual in identical situations or a homogenous group of individuals. Again, this assumption strongly favors the analyst's ability to learn about underlying parameters. Finally, we impose that individuals can be described as (partially) naive quasi-hyperbolic discounters. We are agnostic as to the nature of the task, so our analysis applies when task-completion leads to  immediate benefits, immediate costs, or both.

In Section \ref{sec:example}, we introduce two motivating examples. The first highlights that, even when the parametric form of the underlying unobservable payoff distribution are known, bunching at the deadline is insufficient to distinguish a time-consistent from a time-inconsistent agent. In the example, the cost of completing the task are drawn from a log-normal distribution and in every period the stopping behavior of time-consistent agent looks almost identical to that of an agent with a present-bias parameter $\beta = 0.7$, whose cost are drawn from a different log-normal distribution. The second example illustrates how the estimated present-bias can depend crucially on common parametric assumptions about the unobservable payoff distribution---even when the analyst knows (or guesses correctly) the long-run discount factor, as well as the mean and variance of the underlying stationary payoff distribution. While we suppose that in reality payoffs are drawn from a uniform distribution and the agent is time-consistent $\beta=\subbeta=1$, when the analyst supposes costs are drawn either from a normal, log-normal, extreme value, or logistic distribution, her  squared-distance-minimizing or likelihood-maximizing estimate of $\beta$ varies between $0.561 - 0.819$, with the exact value depending on the parametric family (and the degree of sophistication) the analyst imposes. 
Furthermore, the squared error associated with some of these incorrect estimates is below $0.232$\%---suggesting that with finite noisy data it is difficult for the analyst to realize when she picks an incorrect functional form.
Motivated by the importance of the parametric assumptions in the example, we turn to the main focus of the paper: what lessons about time-inconsistent preferences and naivete thereof can be learned non-parametrically?

As a useful preliminary step, Section \ref{sec:recursive} establishes that the agent's perceived continuation value is characterized by a simple recursive equation. Section \ref{sec:task_completition} establishes that for any quasi-hyperbolic discounter---independently of whether she is sophisticated or (partially) naive and of her degree of impatience---the subjective continuation value decreases the closer the agent gets to the deadline. To see the intuition behind the theorem, consider first the case in which the task always generates a net benefit. Then from the perspective of Self 1, all future selves are too impatient, and hence tend to perform the task to early. By extending the deadline, the formerly last period's self now can decide and perform the task later. As from any earlier self's perspective she is too eager to complete the task,  the direct effect of additional delay on any earlier self is positive. Now consider the former penultimate self; her perceived continuation value of waiting increases because she strictly prefers future selves to wait whenever they choose to do so. This, in turn, induces her to act more patiently, benefiting all earlier selfs, and so forth. Hence, in the case of net benefits, a quasi-hyperbolic discounter does not want to impose an earlier deadline. 

Consider next the case in which completing the task is always costly. When comparing a $(T-1)$-period to $T$-period deadline, Self 1 realizes that if she does not engage in the task in the $T$ period problem, Self 2 will face a $T-1$-period problem. That subgame is identical to the one she faces in the $T-1$ period problem, and future selves who are $s$ periods away from the deadline will therefore behave identically in the two problems. Hence for $s \in {1,\cdots T-1}$, the task completion probability $s$-periods before the deadline is identical, and due to discounting of future costs, Self 1 is strictly better off when selecting the $T$-period problem and not doing the task. The formal proof extends these intuitions to the case in which the support of the net benefit distribution can contain positive and negative payoffs.


 Because the agent in our model completes the task when the current benefit is greater than her subjective continuation value, Theorem \ref{prop:monotone-values} implies that a quasi-hyperbolic discounter becomes more and more likely to complete the task the closer she is to the deadline. This, therefore, provides another simple testable prediction, which also implies that the agent never wants to impose a shorter deadline.\footnote{Despite her tendency to procrastinate, hence, when the payoffs are independently drawn from a stationary distribution, a quasi-hyperbolic discounter's willingness to pay for an earlier deadline is always non-positive. This is noteworthy as self-imposed deadlines by students has been used to identify sophisticated procrastinators  \citep[e.g.][]{arielywertenbroch,bisinhyndman}; our result suggests that these students either do not have quasi-hyperbolic preferences or that they must foresee a non-stationary environment,
which induces them to impose an earlier deadline. A self-imposed-deadline-based classification, hence, is conservative in identifying agents who are aware of their time-inconsistent preferences.} Through a simple counterexample, however, we also highlight that this result relies on payoffs each period being drawn from the same underlying distribution.\footnote{Furthermore, in Section \ref{sec:discussion} we note that the prediction need not hold for a heterogenous population of time-consistent individuals that each faces a stationary payoff distribution.}

Section \ref{sec:unidentified} establishes our main result: if the agent is either sophisticated ($\hatb = \beta$) or fully naive ($\hat{\beta} =1$), for {\it any given} long-run discount factor $\delta$ and present-bias parameter $\beta$, any given penalty of not completing the task, and \textsl{any} weakly increasing profile of task completion, there exists a stationary payoff distribution that rationalizes the agent's behavior (Theorems \ref{thm:non-identifiability-sophisticate} and \ref{thm:non-identifiability}, respectively). This implies that for \textsl{any} data set the analyst may observe, absent parametric assumptions it is impossible for her to learn {\it anything} about the agent's degree of time-inconsistency or level of sophistication. Importantly, this absence of even partial identification continues to hold even if the analyst imposes a priori restrictions on permissible long-run discount factors. 
A very rough intuition for this fact is as follows: whether a self prefers to do a task today or tomorrow depends on her time preferences and on the perceived option value of waiting. The option value of waiting, in turn, depends on the payoff distribution. Through changing the unobservable payoff distribution, we can hence undo a change in the present bias or long-run discount factor of the agent. 

Technically, however, a local change in the payoff distribution changes continuation values in every period in a highly non-linear way, so to establish that we can construct an appropriate payoff distribution, we need a non-local argument. This is where we use the assumption that the agent is either sophisticated or fully naive.  The fact that a sophisticated agent makes no forecast error enables us rewrite the recursive equations determining the perceived continuation values in a simple manner. Based on this rewrite, we transfer the search for an appropriate distribution to that of solving for a fixed point of a  system of linear equations. This proof method, however, cannot be used if the agent is partially naive as the corresponding system becomes non-linear. 

For  a fully naive agent the problem becomes tractable for a different reason. Because a fully naive agent believes to be time-consistent, we can establish that a first-order stochastic increase in the stationary payoff distribution, increases the agent's subjective continuation value in every period (Lemma \ref{lem:aux-properties-naive}). In addition, we establish that we can map subjective continuation values into a payoff distribution that gives rise to the desired completion times in such a way that greater subjective continuation values lead to a first-oder stochastic increase in the stationary payoff distribution. The combination of these two steps leads to a monotone operator on subjective continuation values to which we can apply Tarski's Theorem, and thereby establish the existence of a payoff distribution that gives rise to the data's stopping probabilities. We also, however, provide a simple example in which a first-order stochastic dominance increase in the stationary payoff distribution makes a sophisticated quasi-hyperbolic discounter worse off. In the example, the agent prefers to pay a fixed utility-tax immediately upon completing the task. This tax reduces his temptation to stop even after a low payoff realization, and the induced more virtuous behavior of future selves overcompensates the direct payoff loss due to the tax. The example highlights why our proof technique does not cover the more general case of a partially naive agent.

In our proofs of Theorems \ref{thm:non-identifiability-sophisticate} and \ref{thm:non-identifiability}, we freely construct a stationary net-benefit distribution. One may hope to identify present-bias through economically meaningful restrictions on this distribution. Arguably, the most natural assumptions are those regarding the moments of the net-benefit distribution; for example, an analyst may have an idea regarding the possible expected net benefit of doing the task---that is regarding the mean of $F$---or may be willing to impose that net benefits do not vary to much between periods (restricting the variance of $F$). Our example in Section \ref{sec:example}, however, already highlights that even fixing these moments, common parametric assumptions can lead to widely varying estimates of the agent's time preferences. To expand on this point, in Section \ref{subsec:moments} we establish that as long as the penalty is unobservable or the task is mandatory, we can find a net benefit distribution with \textsl{any} given mean and non-zero variance that rationalizes the observed stopping behavior for a time-consistent agent with $\delta =1$. Any identification of present-bias parameter $\beta$ in this case, therefore, must follow from parametric restrictions on higher-order moments of the distribution, for which we see no convincing economic motivation in most contexts.

Section \ref{sec:rich data} asks whether non-parametric identification is feasible with richer data in which the analyst does not only observe the stopping probabilities but, in addition, observes the agent's willingness to pay for continuing with the stopping problem in each period. In the case of tax-filing, for example, this amount to eliciting the willingness to pay for having someone else file one's taxes immediately with zero hassle.\footnote{As we explain carefully in Section \ref{sec:rich data}, our procedure does not explicitly or implicitly rely on the agent comparing monetary rewards at different points in time, so it is robust to standard critiques of eliciting time-preference via monetary rewards \citep{augenblickniederle,ericsonlaibsonreview,ramsey}.} For the case of a sophisticated agent who's contemporaneous utility function is quasi-linear in money, Theorem \ref{thm:non-para_identification} provides an analytical answer in closed form. Indeed, to check whether or not the data is consistent with a given pair of parameters $\beta,\delta$,  the analyst only needs to verify a simple set of inequalities. The key analytical insight is contained in Lemma \ref{lem:mass_points_sufficient}, which establishes that it suffices to consider distributions that have $T+1$ mass points. Intuitively, the option value of waiting is determined by the probability with which the agent stops at given future point in time and the expected payoff conditional on doing so. Hence, moving the probability mass between any two continuation values to the expected payoff conditional on falling between these two values leaves the agent's continuation values and stopping probabilities unaltered. Therefore, the analyst can restrict attention to such relatively simple distributions. 
Economically, observing the continuation values allows the analyst to distinguish between a taste for immediate gratification and option-value-of-waiting-based delays because a high option value requires the unobservable payoffs to differ significantly. As a consequence, as the deadline approaches and the agent foresees less future draws, the option value must decrease quickly. In contrast, a present-biased agent's continuation value decreases at a slower rate. We also argue that at the cost of relying on numerical techniques commonly used in applied work, our set-identification result can be extended  straightforwardly to cover partial naivete and non-linear utility in money. 

Applying our Theorem \ref{thm:non-para_identification} to the example introduced in Section \ref{sec:example}, however, illustrates that the analyst may need to observe a large number of continuation values to be able to tightly identify the present-bias parameter. In the example, there is no meaningful identification with $5$ periods of data, but $20$ periods are enough to tightly identify $\beta$ when $\delta=1$ is known to the analyst. Given that we made a number of assumptions facilitating identification---such as that the exact stopping probabilities and continuation values are observable to the analyst---, we think that the overall message of our analysis suggests a substantial amount of additional data is needed to empirically identify a taste for immediate gratification or the degree of sophistication without relying on parametric assumptions. We point out that if the analyst observes a heterogenous population, much richer stopping patterns can be explained in Section \ref{sec:discussion}, where we conclude by discussing some broader implications of our analysis.



\section{Setup}
\label{sec:setup}
Let time $t=1,2,\cdots, T+1$ be discrete. We consider an agent with quasi-hyperbolic preferences who can choose when and whether to complete a single given task before some deadline $T$. More precisely, we suppose that the agents' utility is time-separable, and denote a level of instantaneous utility the agent receives in period $t$ by $u_t$; let 
\begin{equation}
\label{eq:true_pref}
U^t = u_t + \beta \, \sum_{s=t+1}^{T+1} \delta ^{s-t} \, u_s,
\end{equation}
denote the utility over sequence of $(u_t,\cdots u_{T+1})$ of self $t$. Following \cite{odonoghuerabindoing}, we allow the agent to have incorrect beliefs regarding future selves' behavior. The agent believes that all future selfs $r>t$ maximize 
\begin{equation}
\label{eq:sub_pref}
\hat{U}^r = u_r + \subbeta \,  \sum_{s=r+1}^{T+1} \delta ^{s-r} \, u_s.
\end{equation}
We allow for any vector of preference and belief parameters $(\delta,\beta,\subbeta) \in (0,1]^3$. In case $\subbeta =\beta =1$, the agent has time-consistent preferences with an exponential discount factor $\delta$. In case $\beta < 1$, she has a taste for immediate gratification.  We say she is sophisticated---i.e. perfectly predicts her future behavior---when $\subbeta =\beta$,  she is fully naive---i.e. believes that her future selves behave according to her current preference---if $\subbeta =1$, and otherwise say that she is partially naive. Our setup covers the case in which the agent overestimates her own future taste for immediate gratification $\subbeta<\beta$ as well as the case in which she underestimate it $\subbeta>\beta$. 

The agent can complete the task once during the periods $t=1,\cdots,T$, so that $T$ is the deadline before which the task needs to be completed. If the agent does not undertake the task in a given period $t=1,\cdots,T$, we normalize her instantaneous utility $u_t$ to zero. If she completes the task she gets an instantaneous utility of zero in period $T+1$, while if she did not complete the task by the end of period $T$, the agents gets a (utility) penalty of $\by/ (\beta \delta) \in \RR_- \cup \{-\infty\}$ in period $T+1$.\footnote{In other words, $\by$ is self $T$'s continuation value when not completing the task. Expressing the penalty in this way simplifies the exposition below.} Setting $\by=-\infty$, this encompasses the case where the task is \textsl{mandatory} so that  the agent is forced to complete the task by the end of period $T$; and setting $\by=0$, this encompasses the case in which the task is \textsl{optional} so the agent only completes the task if her active self decides to do so.  Finally, we suppose that in every period $t$ the instantaneous utility of completing the task is drawn independently from a given payoff distribution $F$, which is known to the agent. 

We look for \emph{perception-perfect equilibria} \citep{odonoghuerabindoing,odonoghuerabinchoice} in which each self $t$ chooses an optimal strategy given its prediction of future selves' behavior, and a self $t$'s prediction of future selves' behavior are consistent with how a future self with preference parameter $\subbeta$ would optimally behave. More formally, let $Y^t=(y_1, \cdots,y_t)$ be the history of payoff realizations up to time $t$. A pure strategy for Self $t$ is a mapping $\sigma_t(Y^{t-1},y_t) \rightarrow \{0,1\}$, with the interpretation that $1$ means Self $t$ completes the task. 
A perception-perfect equilibrium is a pair of strategies $(\sigma_1,\cdots,\sigma_T)$ 
and $(\hat{\sigma}_2, \cdots,\hat{\sigma}_T)$ such that for all
 $t \in \{1,\cdots,T\}$, $\sigma_t$ maximizes $U^t$ under the assumption that selves $r>t$ use strategy 
 $\hat{\sigma_r}$, and for all $t \in \{2,\cdots,T\}$, the strategy $\hat{\sigma}_t$ maximizes $\hat{U}^t$  under the assumption that selves $r>t$ use strategy $\hat{\sigma}_r$. In addition, we restrict attention to perception-perfect equilibria in which all selves that are indifferent between completing the task and waiting choose to wait.\footnote{Without a given tie-breaking assumption, we could rationalize any behavior by simply assuming that the payoff of completing the task is $0$ with certainty in all periods. In that case, any stopping probability in any period is trivially optimal, independently of the agent's time preferences. All our results below extend to the case in which the agent completes the task with some given positive probability when indifferent. Furthermore, in case the agent's benefit distribution admits a density, the tie-breaking assumption is obviously immaterial. And even otherwise, the case in which there is a mass-point at a payoff at which the agent is indifferent between completing the task and waiting is knife-edge.}

\section{Examples on the Influence of Parametric Assumptions}
\label{sec:example}

\begin{example}\label{ex:bar_plot}
To illustrate the difficulty of identifying time-inconsistency from an agent's stopping behavior, consider the following stylized example. A sophisticated agent receives a parking fine, which has to be paid within ten days of receiving it. In case she does not pay the fine, she incurs a known cost of $\$5$ in addition to the fine. Furthermore, the agent's long-run (daily) discount rate is (well approximated by) $\delta =1$.
\renewcommand{\baselinestretch}{1}

\begin{figure}
\begin{center}
\includegraphics[width=3.8in]{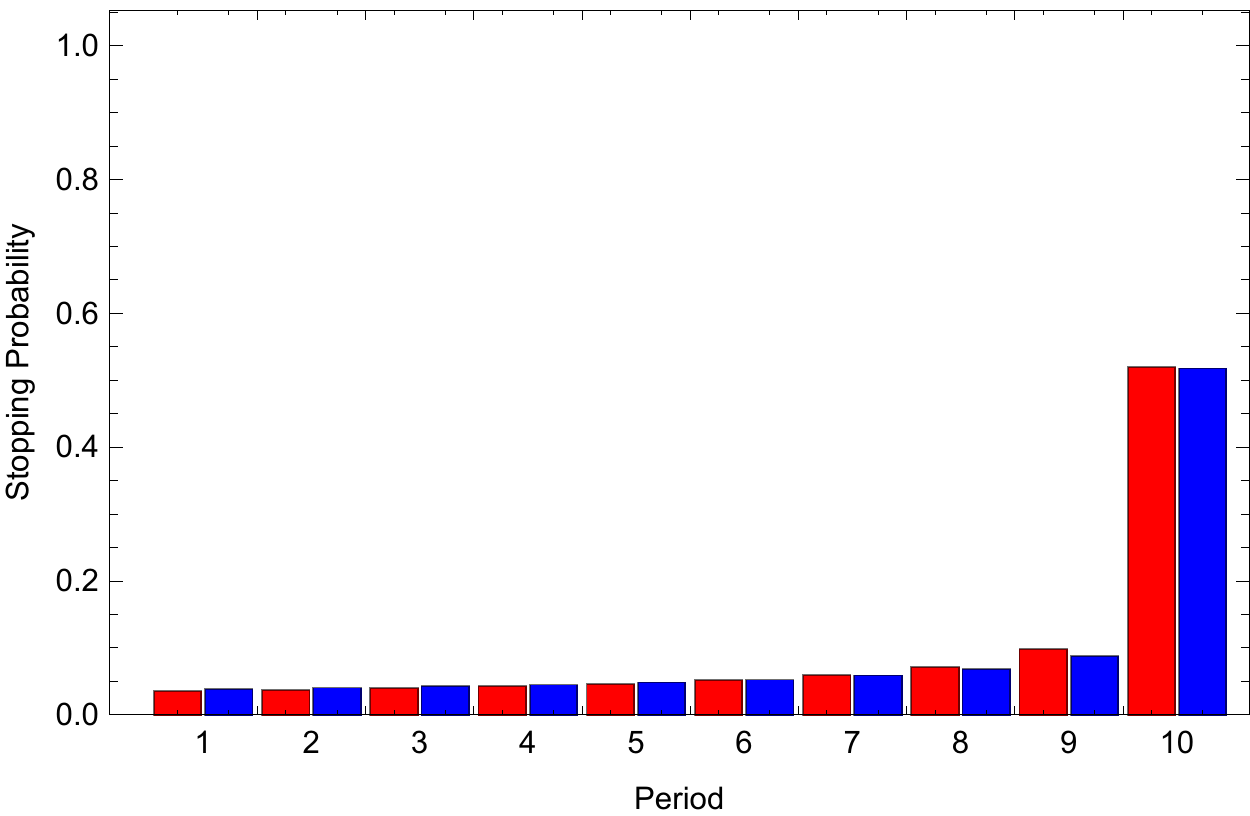}

\end{center}
\caption{Observed Task Completion Times. The above graphs illustrates the observed stopping times. In both cases $\delta =1$, and the penalty for not doing the task is $-5$. The red bar plot shows the distribution of task completion times of a time-consistent agent whose cost of completing the task are drawn from a log-normal distribution, whose underlying normal distribution has mean $\mu=1$ and variance $ \eta=1$. The blue bar plot that of a sophisticated time-inconsistent agent with $\beta = 0.7$ whose cost are drawn from a log-normal distribution with parameters $\mu=0, \eta=2.3$.} 
\label{fig:stopping_example}

\end{figure}

\renewcommand{\baselinestretch}{1.5}

Figure \ref{fig:stopping_example} compares the stopping behavior of a time consistent agent who draws the cost of completing the task from a log-normal distribution whose underlying normal distribution has mean $\mu =1$ and variance $\eta = 1$ (red bar plot) to that of a sophisticated time-inconsistent one with a present-bias parameter $\beta =0.7$ who draws the cost from  a log-normal distribution with parameters $\mu=0, \eta=2.3$ (blue bar plot).
\end{example}

An obvious first lesson from the example is that bunching at the deadline is no reliable guide to identifying time-inconsistency: both agents probability of completing the task in the final period is just above 50\%. Indeed, both agents stopping behavior is remarkably similar throughout and the observed stopping probabilities differ by less than $1\%$ in any period, suggesting that even an analyst who wants to test only between these two possible types faces a difficult problem in practice.\footnote{Independently of our work, \citeauthor*{heffetzodonoghue} observe that substantially different values of $\beta$ can explain the parking-ticket payment behavior in New York City, which they analyze in \cite{heffetzodonoghue}. They illustrate this supposing that the cost for paying the parking ticket is drawn from the small parametric family of distributions that has a mass point at zero and admit a constant density on an interval above zero. Their real-world data nicely demonstrates the practical importance of the identification challenge we illustrate in Example \ref{ex:bar_plot} with synthetic data.  We are very grateful to these authors for sharing their example with us during private communication.}


In the above illustrative example, the analyst knows or correctly guesses the parametric class of distributions (log-normal) from which the payoffs are drawn. The example suggests that without knowing its exact parameters, nevertheless, it is hard to correctly identify the time-preference parameters. In reality, however, payoffs are drawn from an unobservable payoff distribution and for typical field data---such as parking tickets---an analyst does \textsl{not} know the parametric form of the payoff distribution. The following example highlights how crucial common functional form assumptions routinely imposed in applied papers can be in determining the analyst's findings. For this example, we suppose that the analyst has precise prior knowledge about the mean and the variance of the unobservable payoff distributions but is unsure as to the exact parametric family from which these payoffs are drawn. Indeed, it strikes us as extremely unreasonable that an analyst has prior knowledge beyond some (typically vague) ideas about the first two moments of this distribution.

\begin{example} \label{ex:parametric form} We suppose that the agent has $5$ periods to complete the task and the agent's value of completing the task are drawn from a uniform distribution over $[-1,1]$; in reality the agent is time-consistent with $\beta=\delta=1$.\footnote{Think of a parent that promised their kid to see a theatre play that shows for seven more days. The parent is self-employed and needs to complete tasks at work as they come in. When not being very busy, the parent enjoys the joint activity. When very busy, however, he is distracted during the play and needs to stay up late afterwards completing his work tasks. Not going to the play after having promised to do so, however, is not a possibility.} The corresponding stopping probabilities are $0.25827, 0.304687, 0.375, 1/2, 1$, which we suppose the analyst can observe exactly. In addition, we assume the analyst knows the true mean ($0$) and standard deviation ($0.577$) of the stationary payoff distribution $F$ but not its exact functional form. Furthermore, suppose the analyst correctly imposes that $\delta =1$ when analyzing the data. Let the analyst consider four standard parametric families of distributions: normal, log-normal, extreme value, and logistic. For each of these families, the analyst selects the parameter $\beta$ that best fits---in the sense of squared distance or log-likelihood---the observed stopping probabilities allowing the agent to be either naive or sophisticated. Table \ref{tab:table1} reports the parameter estimates for $\beta$ and the squared distance/log-likelihood for the different parameterizations of the error distribution.\footnote{The estimates are computed using grid search with a distance of $0.0005$ between grid points.}

\begin{table}
  \begin{center}
    \label{tab:table1}
\begin{tabular}{@{}lllcll@{}} 
    \toprule
      \multirow{2}{*}{\textsl{Parametric Family}} & \multicolumn{2}{c}{\textsl{Sq. Distance Minimzation}} && \multicolumn{2}{c}{\textsl{Likelihood Maximization}} \\
      \cmidrule{2-3} \cmidrule{5-6} 
      &$\beta$ & Distance && $\beta$ &  Log-Likelihood  \\
      \hline
 \text{Normal Sophisticate} & 0.819 & 0.0026777 && 0.818 & 1.59188 \\
 \text{Normal Naive} & 0.817 & 0.00231803 && 0.816 & 1.59187 \\
 \text{Extreme Value Sophisticate} & 0.57 & 0.0402888 && 0.5705 & 1.59638 \\
 \text{Extreme Value Naive} & 0.561 & 0.0396802 && 0.562 & 1.59627 \\
 \text{Logistic Sophisticate} & 0.7605 & 0.00331235 && 0.7595 & 1.59189 \\
 \text{Logistic Naive} & 0.7565 & 0.00267175 && 0.7555 & 1.59188 \\
 \bottomrule
\end{tabular}
       \caption{Parameter estimates of $\beta$ and squared distance and log-likelihood.}
  \end{center}
\end{table}

The analyst's estimates of $\beta$ range between $0.561 - 0.819$ even in this idealized situation in which she has infinite data, actually knows the mean and standard deviation of $F$, and knows the long-run discount factor $\delta$. And if the analyst engaged in model testing selecting the model on the basis of minimizing squared distance or maximizing log-likelihood, she would conclude that the agent is naive time-inconsistent with $\beta =0.817/0.816$ while in truth the agent is time-consistent and $\beta=1$. Furthermore, for the normal distribution the squared difference in stopping probabilities in the sophisticated and naive case are remarkably small (less than $0.232\%$), so (in  a finite data set analogue) nothing would indicate to the analyst that these are bad distributional choices to model the unobservable shocks.\footnote{ If the analyst does not know the mean and standard deviation of the shock distribution and thus needs to estimate these parameters as well, she is able to fit the data even better, making it even harder to detect her misspecification.}
\end{example}
\medskip

Our general results below, which establish that non-parametrically the degree of time-inconsistency is never identified from task completion data, prove that the above examples are not artefacts of the numbers we have chosen. For every set of model parameters $\delta,\beta,\subbeta$ and any given dataset, there exists some unobserved stationary payoff distribution that perfectly fits the data. Thus, the analyst can rule out parameter values for $\delta,\beta,\subbeta$ only through ad-hoc assuming a specific parametric family of distributions. As a consequence, the analyst's conclusions are---in  line with Example 2---solely determined by her parametric choice for the unobservable payoff distribution.

\section{Preliminary Analysis: Recursive Structure}
\label{sec:recursive}

We begin by establishing that the agent's problem has a simple recursive structure. A strategy $\sigma_t(\cdot,\cdot; z)$ is a cutoff strategy with cutoffs $z=(z_1,\ldots,z_T)$ if
\[
	\sigma_t(Y^{t-1},y_t;z) = \begin{cases}
		0 & \text{ if } y_t \leq z_t \\
		1 & \text{ if } y_r > z_t 
	\end{cases}	\,.
\]
Self $T$ completes the task if and only if her realized payoff is strictly greater than $\by$. Furthermore, selves $t<T$ believe that Self $T$ will complete the task if and only if her realized payoff is strictly greater than $(\nicefrac{\subbeta}{\beta}) \by$. Hence both the perceived and actual strategy in the final period are cutoff strategies. Similarly, if all future selves are perceived to use cutoff strategies, Self $t$ can calculate the perceived continuation value of waiting, and will complete the task if and only if her current payoff is greater than this perceived continuation value. Hence, by induction, all selves use a cutoff strategy and perceive their future selves to use a cutoff strategy. 

For a partially-naive quasi-hyperbolic discounter, the time $t$ and time $t'$ selves have the same beliefs about the strategy future selves---i.e. selves active after time $\max\{t,t'\}$---use. 
Self $t$ thus believes that if she does not complete the task at time $t$, the task will be completed at the (random) time
\[
	\subtau_t = \min\{s>t\colon y_s > c_s \} \,,
\]
where $c_s$ is the perceived cutoff that selves $t<s$ believe Self $s$ will use. Trivially, for all $s>t$ the stopping time $\subtau_{s}$ equals $\subtau_{t}$ conditional on not stopping before time $s+1$,
\[
	\PP[\subtau_s = \subtau_t \mid \subtau_t > s] = 1\,.
\]
Hence, Self $t$  believes that her \textsl{perceived continuation utility} $v_t$ if she does not complete the task at time $t$ is given by
\[
	v_t = \beta \, \EE \left[ \delta^{\subtau_t-t} y_{\subtau_t} \right] \,.
\]
Since Self $t$  stops whenever the value of completing the task immediately is greater than her subjective continuation value, the time $\tau_t$ at which the task is completed conditional on not having been completed before time $t$ is given by
\[
	\tau_t = \min \{s>t \colon y_s > v_s \} \,. 
\]
We first show that the perceived continuation values satisfy a recursive equation.\footnote{Throughout this paper, $\int \cdot \, dF$ denotes the Riemann--Stieltjes integral.} 
\begin{lemma}[Recursive Characterization]\label{lem:rec-representation}
A pair of strategies $(\sigma,\hat{\sigma})$ constitute a perception-perfect equilibrium if and only if both are cut-off strategies with cutoffs $(v,c) \in \RR^T \times \RR^T$ that satisfy the equations
\begin{equation}\label{eq:rec-representation}
	v_t = \begin{cases} \beta \,\delta \int_{\nicefrac{\subbeta}{\beta}\,\,v_{t+1}}^\infty z \,d F(z) + F(\nicefrac{\subbeta}{\beta}\,\,v_{t+1}) \, \delta \, v_{t+1} & \text{ for } t<T\\
	\by &\text{ for } t=T\end{cases}
\end{equation}
and $c_t = \left(\nicefrac{\subbeta}{\beta}\right) \,\,v_t$.
\end{lemma}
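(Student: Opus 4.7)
The plan is to argue by backward induction on $t$ running from $T$ down to $1$, simultaneously establishing three statements about any perception-perfect equilibrium: (i) Self $s$'s actual continuation value $v_s$ and the cutoff $c_s$ that earlier selves attribute to her are history-independent real numbers for every $s \ge t$; (ii) $\sigma_s$ has Self $s$ stopping iff $y_s > v_s$, and $\hat\sigma_s$ has her perceived to stop iff $y_s > c_s$; and (iii) $c_s = (\subbeta/\beta)\,v_s$. The converse implication---that any $(v,c)$ satisfying the recursion with this ratio induces an equilibrium---will drop out of the same computation.

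\textbf{Base case.} At $t = T$ Self $T$ compares the immediate payoff $y_T$ with the penalty $\by/(\beta\delta)$ received in $T+1$, which from her own perspective is discounted to exactly $\by$. The tie-breaking convention pins down stopping iff $y_T > \by$, so $v_T = \by$. Repeating the same comparison with $\beta$ replaced by $\subbeta$ (as any earlier self does when reasoning about Self $T$) gives the perceived cutoff $c_T = (\subbeta/\beta)\,\by = (\subbeta/\beta)\,v_T$.

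\textbf{Inductive step.} Assume (i)--(iii) for $s > t$. Then $\subtau_t = \min\{s > t : y_s > c_s\}$ is a function of the future payoff draws alone, so $v_t = \beta\,\EE[\delta^{\subtau_t - t}\,y_{\subtau_t}]$ is a constant. Conditioning on $y_{t+1}$ and using $\subtau_t = t+1$ on $\{y_{t+1} > c_{t+1}\}$ while $\subtau_t = \subtau_{t+1}$ on $\{y_{t+1} \le c_{t+1}\}$ (with $\EE[\delta^{\subtau_{t+1}-(t+1)}\,y_{\subtau_{t+1}}] = v_{t+1}/\beta$) yields
\[
v_t \;=\; \beta\delta\int_{c_{t+1}}^\infty z\,dF(z)\;+\;\delta\,F(c_{t+1})\,v_{t+1},
\]
and substituting the inductive hypothesis $c_{t+1} = (\subbeta/\beta)\,v_{t+1}$ reproduces the recursion of the lemma. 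Because $v_t$ is the exact utility of waiting from Self $t$'s own perspective, the tie-breaking rule pins down $\sigma_t$ as stopping iff $y_t > v_t$. Running the identical conditioning argument with $\beta$ in the outer factor replaced by $\subbeta$---as an earlier self does---produces cutoff $(\subbeta/\beta)\,v_t = c_t$ for perceived Self $t$, completing the induction.

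\textbf{Converse and main obstacle.} For the ``if'' direction, given any $(v,c)$ satisfying the recursion with $c_t = (\subbeta/\beta)\,v_t$, the same conditioning computation shows that $v_t$ is the actual continuation utility and $c_t$ the perceived one under the prescribed cutoff strategies; hence, under the tie-breaking rule, each self's unique best response is the prescribed cutoff strategy, so the pair forms a perception-perfect equilibrium. The only slightly delicate point I anticipate is the treatment of atoms of $F$ at $c_{t+1}$: the tie-breaking convention places mass at $c_{t+1}$ on the continuation branch, which is exactly what the weight $F(c_{t+1})$ on $\delta\,v_{t+1}$ records once $F$ is taken right-continuous. Everything else is a direct conditioning argument, so I expect this bookkeeping to be the only technical subtlety.
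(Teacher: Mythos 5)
Your proposal is correct and follows essentially the same route as the paper: the same backward induction establishing cutoff strategies, the same decomposition of $v_t$ by conditioning on whether stopping occurs at $t+1$ or later, the same derivation of $c_t = (\nicefrac{\subbeta}{\beta})\,v_t$ from replacing $\beta$ by $\subbeta$ in the outer discount factor, and the same inductive optimality check for the converse. Your explicit remark on atoms of $F$ at the cutoff and the tie-breaking convention is a minor bookkeeping point the paper leaves implicit, but it changes nothing substantive.
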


\begin{proof}
We first show that the conditions are necessary for a perception-perfect equilibrium.
We already argued that any equilibrium must be in cutoff strategies and that the cutoffs used by each self must equal their perceived continuation value $v$.
We can rewrite the perceived continuation utility by considering the event that the task is completed in period $t+1$ as well as the complementary event that it is completed later
\begin{align*}
	v_t &= \beta \, \, \EE \left[ \delta^{\subtau_t-t} y_{\subtau_t} \right] 	= \beta \, \, \EE \left[ \mathbf{1}_{\subtau_t = t+1} \delta^{\subtau_t-t} y_{\subtau_t} + \mathbf{1}_{\subtau_t > t+1} \delta^{\subtau_t-t} y_{\subtau_t} \right] .\\
	\intertext{Because Self $t$ believes the task is completed in period $t+1$ if and only if the benefit is greater than the subjective cutoff $y_{t+1} > c_{t+1}$, this equals}
	v_t &= \beta \, \, \EE \left[ \mathbf{1}_{y_{t+1}> c_{t+1}} \delta y_{\subtau_t} + \mathbf{1}_{y_{t+1}\leq c_{t+1}} \delta^{\subtau_t-t} y_{\subtau_t} \right] \,.
\end{align*}
Since $y_{t+1}$ is distributed according to $F$ and $\subtau_t = \subtau_{t+1}$ conditional on not stopping in period $t+1$, we can use the definition of a Riemann--Stieltjes integral to rewrite the above as
\begin{align*}
	v_t &= \beta \delta \int_{c_{t+1}}^\infty z \,d F(z) + F(c_{t+1}) \,  \beta  \delta \, \EE \left[ \delta^{\subtau_{t+1}-(t+1)} y_{\subtau_{t+1}} \right].
\end{align*}
Using the definition of $v_{t+1}$ to rewrite the last summand above, we therefore have that
\begin{equation}\label{eq:v-c-dynamic}
v_t = \beta \delta \int_{c_{t+1}}^\infty z \,d F(z) + F(c_{t+1}) \, \delta \, v_{t+1}\,.
\end{equation}
Here, $v_t$ is the cutoff that Self $t$ actually uses. Prior selves, however, believe that Self $t$ discounts with hyperbolic weight $\subbeta$, so the perceived cutoff $c_t$ they think Self $t$ uses solves
\begin{align*}
	c_{t} &= \subbeta \, \EE \left[ \delta^{\subtau_t-t} y_{\subtau_t} \right] =  \left(\nicefrac{\subbeta}{\beta}\right)\, \beta\, \EE \left[ \delta^{\subtau_t-t} y_{\subtau_t} \right] = \left(\nicefrac{\subbeta}{\beta}\right) \,\,v_t \,. 
\end{align*}

\noindent Using this equation to replace $c_{t+1}$ in \eqref{eq:v-c-dynamic} establishes that the continuation values $v_1,\ldots,v_{T-1}$  satisfy the recursive equation
\[
	v_t = \beta \, \delta \int_{ \left(\nicefrac{\subbeta}{\beta}\right)\,\,v_{t+1}}^\infty z \,d F(z) + F( \left(\nicefrac{\subbeta}{\beta}\right)\,\,v_{t+1}) \, \delta \, v_{t+1}\,.  
\]
That any such pair of cutoff strategies constitutes a perception-perfect equilibrium follows from checking the (perceived) optimality conditions inductively starting from the last period.
\end{proof}%
To see the intuition behind Equation \ref{eq:rec-representation}, suppose first that the agent is sophisticated ($\subbeta=\beta$) in which case ($\nicefrac{\subbeta}{\beta}=1$). Then the first term is the discounted benefit of stopping tomorrow, which the agent does whenever the benefit of stopping falls above the continuation value of tomorrow's self. This payoff is discounted according to Self $t$'s short-term discount factor $\beta \delta$. The second term captures the fact that with probability $F( v_t)$ tomorrow's self continues because it prefers its perceived continuation value $v_{t+1}$. As today's self discounts payoffs that realize after period $t+1$ by a factor of $\delta$ more than tomorrow's self, this term is discounted with $\delta$. When predicting future behavior, a partially naive agent uses the perceived cutoffs $c_{t} = (\nicefrac{\subbeta}{\beta})\, v_t $ determined by the continuation value a former time $s<t$ self believes Self $t$ has.  If $\subbeta>\beta$, current selves overestimate future selves' patience and, hence, the cutoff they use. If $\subbeta<\beta$,  current selves underestimate future selves' patience and, hence, their cutoffs.





\section{Rate of Task Completion Increases Over Time}
\label{sec:task_completition}

Building on this recursive formulation, this section establishes that a partially-naive quasi-hyperbolic agent is (weakly) more likely to stop and complete the task, the closer she is to the deadline $T$. In other words, the longer away the deadline, the higher the perceived continuation value of the current self. Because the payoff distribution is stationary, comparing the perceived continuation value of period $t$ to that of period $t+1$ is equivalent to comparing the perceived continuation in the first period of task-completion
with a deadline of $T-t$ to that with a deadline of $T-(t+1)$. Interestingly, since the perceived continuation value increases in the distance to the deadline, therefore, a quasi-hyperbolic agent would never want to impose an earlier deadline to keep herself from procrastinating excessively. While obvious for an exponential discounter---adding an extra period simply increases her choice set and hence makes her better off---the question of whether to limit future selves delay possibility is much more subtle when the agent is a quasi-hyperbolic discounter. Indeed, when the distribution of net benefits is not stationary, it is easy to construct counterexamples in which Self 1 would want to impose an early deadline on future selves.

\begin{example}[Self 1 wants to impose a deadline with a time-dependent payoff distribution]\label{ex:deadlines-help} Consider a sophisticated agent with $\delta =1, \beta =1/2$ who has two periods to complete a mandatory task, and who has a deterministic cost of $0.9$ in the first and $1$ in the second period. Due to her present bias, the agent will complete the task in period 2 giving Self 1 a utility of -1/2. Now add the chance to complete the task in a third period at a cost of 1.5. Then Self 2 strictly prefers to procrastinate, and if Self 1 waits, her utility is -3/4. Thus, adding another period in which the task can be completed makes Self 1 worse off. As a result, Self 1 would be willing to impose a two-period deadline.
\end{example}

Intuitively, because preferences between today's self and future selves are not aligned, if payoffs depend on time restricting future selves' choices through imposing a deadline can be beneficial to today's self. \cite{bisinhyndman} provide further examples in which a sophisticated quasi-hyperbolic agent benefits from imposing a deadline when costs of doing a mandatory task follow a Markov process in which higher costs today are associated with higher costs tomorrow.\footnote{While in our simple example the state changes deterministically, continuity of payoffs implies that the example also hold if with a small probability the costs are redrawn from a uniform distribution over $\{0.9,1,1.5\}$ and otherwise move up deterministically towards the state $1.5$ as in our example.}  What is perhaps surprising is that if costs---or net benefit in our setup---are uncorrelated over time, a sophisticated quasi-hyperbolic agent \textsl{never} wants to impose a deadline.

Indeed, when the payoff distribution is the same across periods, we have:

\begin{theorem}[Monotonicity of the Continuation Value]\label{prop:monotone-values} Let $\delta \leq 1$.
\begin{compactenum}[i)]
\item The subjective continuation values are non-increasing over time
\[
	v_1 \geq v_2 \geq \ldots \geq v_T \,.
\]
\item Every self $t$ prefers a later deadline.
\end{compactenum} 
\end{theorem}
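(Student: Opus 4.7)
My plan is to prove Part~(i), from which Part~(ii) follows, by backward induction on $t$ after reducing to the sophisticated case. First, by stationarity of $F$ and the terminal condition $v_T=\underline{y}$, the recursion in Lemma~\ref{lem:rec-representation} makes $v_t$ depend on $(t,T)$ only through $T-t$; since each self's ex-ante utility $\EE[\max(y,v_t)]$ is monotone in $v_t$, this makes Parts~(i) and~(ii) equivalent statements about the iterates of the recursion. To reduce to the sophisticated case, I set $c_t=(\subbeta/\beta)\,v_t$; a direct substitution into \eqref{eq:rec-representation} yields
\[
c_t \;=\; \subbeta\,\delta\int_{c_{t+1}}^\infty z\,dF(z) + F(c_{t+1})\,\delta\,c_{t+1},\qquad c_T \;=\; (\subbeta/\beta)\,\underline{y}\in\RR_-\cup\{-\infty\},
\]
which is precisely the sophisticated recursion with present-bias $\subbeta$. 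Since $\subbeta/\beta>0$, monotonicity of $(v_t)$ is equivalent to monotonicity of $(c_t)$, so I may assume $\subbeta=\beta$ throughout.

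Setting $\phi(v):=\beta\delta\int_v^\infty z\,dF(z)+\delta v F(v)$ so that $v_t=\phi(v_{t+1})$, the base case $v_{T-1}\geq v_T$ follows by direct computation: the bound $\int_{\underline{y}}^\infty z\,dF(z)\geq\underline{y}(1-F(\underline{y}))$ combined with $\underline{y}\leq 0$ and $\beta,\delta\in(0,1]$ gives $\phi(\underline{y})-\underline{y}\geq -\underline{y}(1-\delta)\geq 0$. For the inductive step, my two main tools are: (a)~the Stieltjes integration-by-parts identity
\[
\phi(b)-\phi(a)\;=\;\delta\int_a^b F(y)\,dy+\delta(1-\beta)\int_a^b y\,dF(y),\qquad a\leq b,
\]
whose first term is always non-negative; and (b)~the bound $\phi(v)\geq\delta v\geq v$ for all $v\leq 0$, obtained from $\phi(v)-\delta v=\delta\int_v^\infty(\beta y-v)\,dF(y)$ by noting that for $y\geq v$ with $v\leq 0$ the integrand is non-negative: if $y\geq 0$ then $\beta y\geq 0\geq v$, and if $y\in[v,0]$ then $\beta y\geq y\geq v$ (using $\beta\leq 1$ and $y\leq 0$). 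Bound~(b) alone delivers monotonicity of iterates that remain non-positive, which already covers the paper's ``always costly'' intuition.

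The main obstacle is the mixed regime in which iterates cross from $\leq 0$ into the positive range, where the second term in~(a) can be negative. My plan is to argue that iterates remain bounded above by the smallest fixed point $v^*\geq 0$ of $\phi$, which exists by continuity because $\phi(v)-v\to +\infty$ as $v\to -\infty$ and $\phi(v)-v\leq 0$ for $v$ above the support of $F$. On $[0,\infty)$ one has $\phi'(v)=\delta[F(v)+vf(v)(1-\beta)]\geq 0$, so $\phi$ is non-decreasing there, and $\phi(v)\geq v$ for $v\in[0,v^*]$ by continuity and minimality of $v^*$ among fixed points; together these yield monotonicity of iterates in the non-negative regime. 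The delicate steps are verifying that the iterates do not overshoot $v^*$ when they first cross zero, and accommodating payoff distributions with atoms (where $\phi$ need not be smooth and the argument must lean entirely on the integrated form~(a)). This dichotomy between the $v\leq 0$ and $v\geq 0$ regimes mirrors the ``always costly'' vs.\ ``net benefits'' split in the paper's intuition.
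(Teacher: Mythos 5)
Your plan follows essentially the same route as the paper: the substitution $c_t=(\subbeta/\beta)v_t$ reproduces the paper's map $g$ from Lemma \ref{lem:prop-g}, your $v^*$ is its $w^\star=\inf\{w\colon g(w)\le w\}$, and your tool (b) is its property $iii)$. But the step you yourself flag as ``delicate''---that the iterates do not overshoot $v^*$ when they first cross from negative to positive---is the crux of the mixed-sign regime, and neither of your tools delivers it as stated: tool (a) applied with $a=v_{t+1}<0\le b=v^*$ has second term $\delta(1-\beta)\int_a^b y\,dF(y)$, which can be negative, and tool (b) only covers non-positive arguments. What is needed is the paper's property $v)$: if $w'\ge 0> w$ then $\phi(w')\ge\phi(w)$, obtained by rewriting $\phi(w)=\beta\delta\int\max\{z,w\}\,dF(z)+(1-\beta)\delta\, wF(w)$ and noting that both summands are ordered when $w'\ge 0\ge w$ (the first because $\max\{z,\cdot\}$ is non-decreasing, the second because $w'F(w')\ge 0\ge wF(w)$). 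With $w'=v^*$ and $w=v_{t+1}$ this gives $v^*=\phi(v^*)\ge\phi(v_{t+1})=v_t$ and closes the induction. Until you supply this (or an equivalent) step, the proof is incomplete exactly where the theorem is hardest.

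Two further gaps are not cosmetic. First, you prove monotonicity of $\phi$ on $[0,\infty)$ via $\phi'$, i.e.\ assuming a density, and you obtain $v^*$ by an intermediate-value argument, i.e.\ assuming continuity. But the paper must apply Theorem \ref{prop:monotone-values} to purely atomic distributions (the construction in the proof of Theorem \ref{thm:non-identifiability-sophisticate} is a sum of Dirac masses), so atoms are not an optional refinement; the paper handles them by observing that on $[0,\infty)$ the map is right-continuous with only upward jumps, so that $\inf\{w\colon\phi(w)\le w\}$ is still a fixed point, and by deriving monotonicity on $[0,\infty)$ from the $\max$ representation rather than from a derivative. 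Second, you do not treat $\delta=1$: there the set $\{w\colon\phi(w)\le w\}$ can be empty (take $\beta=1$ and $F$ with unbounded support, so $\phi(w)-w=\EE[(y-w)^+]>0$ for all $w$), and your anchor $v^*$ need not exist. The paper proves the result for $\delta<1$ and extends to $\delta=1$ by continuity of the continuation values in $\delta$; your argument needs the same (or a bounded-support hypothesis you have not stated).
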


Parts $i)$ and $ii)$ are equivalent since when the payoff distribution is identical across periods, the subjective continuation value in a given period $t$ equals the value in the problem with a deadline of $T-t$ periods. To understand intuitively why a quasi-hyperbolic agent's Self $1$ does not want to impose a deadline with a stationary payoff distribution, consider first the case in which doing the task is always costly---i.e., where the support of $F$ is a subset of $\RR_-$. When comparing a $(T-1)$-period to $T$-period deadline, Self 1 realizes that if she does not engage in the task in the $T$ period problem, self 2 will face a $T-1$-period problem. That subgame is identical to the one she faces in the $T-1$ period problem, and future selves who are $s$ periods away from the deadline will behave identically in the two problems. Hence for $s \in {1,\cdots T-1}$, the task completion probability $s$-periods before the deadline is identical, and due to discounting of future costs, Self 1 is strictly better off selecting the $T$-period problem and not doing the task in the first period.

Suppose now instead, that the agent is sophisticated with quasi-hyperbolic parameter $\beta <1$ and that the payoff of completing the task is always positive---i.e., the support of $F$ is a subset of $\RR_+$. From the perspective of a Self $t$, future selves are to impatient, and therefore to willing to cash in the positive benefit in every future period. Suppose now that Self 1 can extend the deadline from $T-1$ to $T$ periods. In this case, Self $T-1$ will wait for sufficiently low net benefits. Because the time $T-1$ self is more impatient than Self 1 would want it to be, whenever the impatient Self $T-1$ chooses to wait, Self 1's expected payoff increases from waiting. Thus, conditional on reaching period $T-1$, the longer deadline benefits Self 1. Now consider Self $T-2$. With the longer deadline, Self $T-2$'s benefit from waiting increases because it always prefers its future self to not complete the task when the future self chooses to do so. Hence, Self $T-2$ will also act less impatiently, which again benefits Self 1 conditional on reaching period $T-2$. By induction, hence, in expectation Self 1 benefits in every future period from the deadline extension.

Because a partially naive Self 1 thinks that she is sophisticated, and in either case a sophisticated agent's Self 1 does not want to impose a deadline, a partially naive agent will not want to do so either. Hence, the perceived continuation value of a partially naive agents also increase in the distance to the deadline.

Our proof studies properties of solutions to the recursive equation \eqref{eq:rec-representation} to extend the above intuitions to cases in which the support of the payoff distribution may contain positive and negative elements, and hence some future selves can be a priori to eager and others not eager enough to complete the task.

We now turn to an immediate implications of Theorem  \ref{prop:monotone-values}.
Note that the probability $p_t = \PP[\tau_{t-1} = t]$ that the agent stops in period $t$ conditional on not having stopped before is the probability that the value of completing the task $y_t$ is above the subjective continuation value $v_t$; i.e.
\[
	p_t = \PP[ y_t \geq v_t ] = 1-F(v_t)\,.
\]
As the subjective continuation value $v_t$ is non-increasing, we have that the objective probability $p_t$ that the agent stops in period $t$ is non-decreasing.

\begin{corollary}\label{cor:monotone-stopping} Let $\delta \leq 1$.  For any given benefit distribution $F$ and in every perception-perfect equilibrium, the objective probability with which the agent completes the task conditional on not having completed it before is non-decreasing towards the deadline, i.e. 
\[
	p_1 \leq p_2 \leq \ldots \leq p_T\,.
\]
\end{corollary}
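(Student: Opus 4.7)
The plan is to observe that Corollary \ref{cor:monotone-stopping} is essentially an immediate consequence of Theorem \ref{prop:monotone-values} together with the cutoff characterization from Lemma \ref{lem:rec-representation}, so I would present it as a short derivation rather than a fresh argument.

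First I would recall that by Lemma \ref{lem:rec-representation}, in any perception-perfect equilibrium the actual strategy $\sigma_t$ of Self $t$ is a cutoff strategy with cutoff equal to the subjective continuation value $v_t$; the agent stops in period $t$ (given she has not stopped earlier) if and only if the realized payoff $y_t$ exceeds $v_t$. Since the payoffs are drawn independently each period from the stationary distribution $F$, the \emph{objective} conditional stopping probability is therefore
\[
p_t = \PP[y_t \geq v_t] = 1 - F(v_t).
\]
This reduces the claim to a statement about the behavior of $F(v_t)$ in $t$.

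Next, I would invoke part $i)$ of Theorem \ref{prop:monotone-values}, which gives $v_1 \geq v_2 \geq \cdots \geq v_T$. Because $F$ is a cumulative distribution function and hence weakly monotone non-decreasing, it follows that
\[
F(v_1) \geq F(v_2) \geq \cdots \geq F(v_T),
\]
and therefore $1 - F(v_1) \leq 1 - F(v_2) \leq \cdots \leq 1 - F(v_T)$, i.e.\ $p_1 \leq p_2 \leq \cdots \leq p_T$, as claimed.

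There is no real obstacle here: all the substantive work is done in Theorem \ref{prop:monotone-values}. The only thing worth being careful about is that the equality $p_t = 1 - F(v_t)$ uses both that $v_t$ is the \emph{actual} cutoff Self $t$ employs (not the perceived cutoff $c_t = (\subbeta/\beta)\, v_t$ used by earlier selves to forecast behavior) and the tie-breaking convention that an indifferent self waits, which is why the event $\{y_t \geq v_t\}$ and its probability $1 - F(v_t)$ give the correct stopping probability regardless of whether $F$ has a mass point at $v_t$.
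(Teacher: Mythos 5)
Your proposal is correct and follows exactly the paper's own argument: the paper likewise writes $p_t = \PP[y_t \geq v_t] = 1 - F(v_t)$ using the cutoff characterization and then concludes directly from the monotonicity of $v_t$ established in Theorem \ref{prop:monotone-values} together with $F$ being non-decreasing. Your added remark on the tie-breaking convention and on distinguishing the actual cutoff $v_t$ from the perceived cutoff $c_t$ is a sensible clarification but does not change the substance.
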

%


Independently of the naivete  and preference-parameters of a hyperbolic discounter, Corollary \ref{cor:monotone-stopping} provides a simple testable prediction about her task-completion behavior when payoffs are independently and identically distributed over time: the likelihood of completing the task is increasing over time. Section \ref{sec:discussion}, however, emphasizes that researchers need individual not group data to test this prediction.\footnote{Interestingly, this result holds independently of whether the agent over- or underestimates estimates $\beta$, i.e. whether $\subbeta<\beta$ or $\subbeta\geq\beta$.}

\begin{remark}
Corollary \ref{cor:monotone-stopping} establishes that the probabilities of stopping \textsl{conditional} on not having stopped previously increase over time. The unconditional stopping probability, however, may either increase or decrease. This difference is of practical relevance: for example, the conditional stopping probabilities increase over time in the tax-filing data of  \cite{martinezmeier} while the unconditional stopping probabilities decrease.\footnote{See Figure 1 and 2 in \cite{martinezmeier}.}
\end{remark}

\section{Time-Preferences are Unidentifiable from Task Completion}
\label{sec:unidentified}

In this section, we identify a strong sense in which time-preferences are unidentifiable from task completion choices. Recall that we established that for any arbitrary preference profile $\beta,\delta$ and any belief $\hatb$, the profile of stopping probabilities is non-decreasing. 
In this section we establish the converse: absent (parametric) restrictions on the payoff distribution $F$, we show that any non-decreasing profile of stopping probabilities is consistent with any arbitrary preference profile $\beta,\delta$ in case either the agent is  either sophisticated ($ \hatb = \beta$) or fully naive ($\hatb =1$). Hence, it is impossible, for example, to distinguish a naive time-inconsistent agent from a time-consistent one based on their task-completion behavior. Importantly, this impossibility continues to hold even if a researcher is willing to exogenously impose that the ``long-run discount factor'' $\delta$ equals $1$, as is plausible in many applications in which one observes task completion on a frequent (e.g. daily) basis. Similarly, even if the researcher is willing to impose a priori restrictions on plausible levels of $\beta$---including the strong requirement that the agent is time-consistent---, absent exogenous restrictions on $F$, no information on $\delta$ or $\beta$ can be inferred from the task-completion data. 

 Intuitively, whether a self prefers to do a task today or tomorrow depends on her time preferences (as well as beliefs about future selves' time preferences) and on the perceived option value of waiting. The option value of waiting, in turn, depends on the payoff distribution. Through changing the unobservable payoff distribution, we can hence undo a change in the present-bias or long-run-discount factor of the agent. Technically, however, a local change in the payoff distribution affects continuation values in every period in a highly non-linear way, so to establish that we can construct an appropriate payoff distribution, we need a non-local argument. When the agent is either sophisticated or fully naive---for different technical reasons that we explain below---the analysis simplifies and allows us to establish that we can indeed rationalize the stopping behavior for any arbitrarily chosen $\beta,\delta$.
 
For the case in which the penalty is unobservable, we furthermore illustrate that the data is rationalizable as the optimal behavior of a fully patient time-consistent agent $(\hatb=\beta=\delta=1)$ facing an unobservable payoff distribution $F$ with {\it any given} expected value and (non-zero) variance of the distribution; any parametric identification of present bias in such a task-completion setting, therefore, must be based on a prior knowledge of higher-order moments of the benefit distribution.

\subsection{Time-Preferences are Unidentifiable: Sophisticated Case}

In this subsection, we establish that absent (parametric) restrictions on the payoff distribution $F$, any non-decreasing profile of stopping probabilities is consistent with any arbitrary preference profile $\beta,\delta$ of a sophistcated quasi-hyperbolic discounter. In particular, we have: 

\begin{theorem}[Non-identifiability]\label{thm:non-identifiability-sophisticate}
Suppose the agent is sophisticated $\subbeta=\beta$. For every non-decreasing sequence of stopping probabilities $0 < p_1 \leq p_2 \leq \ldots \leq p_T < 1$, every $(\delta,\beta) \in (0,1] \times (0,1]$, and every penalty $\nicefrac{\by}{\beta \delta} \in \RR$, there exists a distribution $F$ that rationalizes the agent's stopping probabilities as the (unique) outcome of a  perception perfect equilibrium.
\end{theorem}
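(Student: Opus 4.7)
Since the agent is sophisticated, $\subbeta/\beta=1$, so Lemma \ref{lem:rec-representation} reduces to $v_t = \beta\delta\int_{(v_{t+1},\infty)} z\,dF(z) + F(v_{t+1})\delta v_{t+1}$ for $t<T$, with $v_T=\by$. Writing $q_t:=1-p_t$, I want to exhibit a distribution $F$ with $F(v_t)=q_t$ that, together with a strictly decreasing sequence $v_1>\cdots>v_T=\by$, satisfies this recursion; by construction the objective stopping probability in period $t$ is then $1-F(v_t)=p_t$.

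My first step would be to parametrize $F$ as a discrete measure: place an atom of mass $q_t-q_{t+1}$ at $v_t$ for each $t=1,\ldots,T-1$ (with the convention $q_{T+1}:=0$, so the atom at $v_T$ carries mass $q_T$), and a final atom of mass $p_1$ at some $w>v_1$. Under the paper's tie-breaking rule the indifferent self $t$ waits, so $F(v_t)=q_t$ yields exactly the prescribed stopping probabilities. Substituting into the recursion reduces it to a system of $T-1$ linear equations in the $T$ unknowns $v_1,\ldots,v_{T-1},w$. Differencing the equations for $t$ and $t-1$ eliminates $w$ and yields the three-term linear recursion $v_{t-1} = a_t v_t + b_t v_{t+1}$ for $t=2,\ldots,T-1$, where $a_t = 1+\delta(q_t(1-\beta)+\beta q_{t+1})\ge 1$ and $b_t = -\delta q_{t+1}$. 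Viewing $V:=v_{T-1}$ as a free parameter, the recursion (together with $v_T=\by$) determines $v_{T-2},\ldots,v_1$, and the remaining $t=1$ equation uniquely determines $w$. It suffices to show that for $V$ chosen sufficiently large, (a) $v_1>v_2>\cdots>v_T$ and (b) $w>v_1$.

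Since the recursion is linear, decompose $v_t = Vf_t + \by g_t$, where $(f_{T-1},f_T)=(1,0)$ and $(g_{T-1},g_T)=(0,1)$ both satisfy it; all $V$-dependence sits in $f_t$. Setting $\rho_t := f_t/f_{t+1}$, the recursion becomes $\rho_{t-1} = a_t - \delta q_{t+1}/\rho_t$. A backward induction shows $\rho_t>1$ for every $t\le T-2$: the base case $\rho_{T-2}=a_{T-1}>1$ holds because $q_T>0$ (from $p_T<1$), and the inductive step reduces to checking $[q_t(1-\beta)+\beta q_{t+1}]\rho_t > q_{t+1}$, which follows at once from $\rho_t>1$ and $q_t\ge q_{t+1}$. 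Hence $f_1>\cdots>f_{T-1}>0=f_T$, so for $V$ large enough the $Vf_t$ terms dominate the $\by g_t$ terms and (a) holds. For (b), solving the $t=1$ equation yields $w$ in closed form, and the condition $w>v_1$ simplifies (using the identity $q_1-q_2+p_1=p_2$) to $v_1(1-\beta\delta p_2) > q_2\delta v_2$; this follows from $v_1>v_2>0$ together with the elementary bound $\delta(q_2+\beta p_2)=\delta(1-p_2(1-\beta))\le\delta\le 1$.

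The main obstacle is the backward induction proving $\rho_t>1$, which is what guarantees that a strictly decreasing positive cutoff sequence extending $\by$ actually exists. Once such a sequence and the associated $F$ are in hand, uniqueness of the perception-perfect equilibrium follows from the fact that for any distribution the cutoffs are uniquely determined by backward induction from $v_T=\by$.
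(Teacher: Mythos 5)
Your proof is correct and rests on the same core strategy as the paper's: take $F$ finitely supported with atoms placed exactly at the (to-be-determined) continuation values so that the recursion of Lemma \ref{lem:rec-representation} collapses to a linear difference equation, then exploit the one remaining degree of freedom to force strict monotonicity. The execution differs in two respects worth recording. The paper uses $T+2$ atoms: it splits the mass $1-p_T$ between $\by$ and an extra point $\by-c_1$, imposes an artificial additional copy of the recursive equation at $k=1$ so that the increments $\pi_k-\pi_{k-1}$ obey a forward recursion initialized at the known gap $c_1$, and then bounds these increments from below by a geometric sum to choose $c_1$ large enough. You instead keep $T+1$ atoms, take $V=v_{T-1}$ as the free parameter, and prove monotonicity by superposing the two basis solutions $f,g$ of the homogeneous difference equation and showing via the ratio recursion $\rho_{t-1}=a_t-\delta q_{t+1}/\rho_t$ that $f$ is strictly decreasing; this is arguably cleaner, as it dispenses with the auxiliary atom and artificial constraint and isolates exactly where $q_T=1-p_T>0$ is used. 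The only loose end is the claim that $w>v_1$ ``follows from $v_1>v_2>0$'': positivity of $v_2$ is not part of your step (a), but it does hold for $V$ large since $f_2\geq f_{T-1}=1>0$ when $T\geq 3$ (and for $T=2$ one checks $v_1(1-\beta\delta p_2)>\delta q_2\,\by$ directly for $V$ large), so this is a one-line patch rather than a gap.
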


Technically, to prove the theorem, we construct a distribution with $t+2$ mass points, where each of the non-extreme values equals the agent's (correctly perceived) continuation value in a given period $t \in \{1,\ldots,T\}$; i.e. the second lowest mass point is set at the value $v_T = \by$, and so on. The probability on each mass point is chosen so that the agent---who waits if and only if $y_t \geq v_t$---selects the exogenously given stopping probability. The constructions is feasible since when $\hatb =\beta$, the recursive representation (Lemma \ref{eq:rec-representation}) takes a particular simple form, and together with the chosen construction of the distribution gives rise to a system of linear equations, which can be solved forward.


\subsection{Time-Preferences are Unidentifiable: Naive Case}

We now turn to the case  in which the agent believes to be time-consistent and establish that for every chosen non-decreasing sequence of stopping probabilities and every chosen preference profile $\beta,\delta$, there exists a payoff distribution $F$ that admits a piecewise constant density and induces the agent to choose the stopping behavior given by the data.

\begin{theorem}[Non-identifiability]\label{thm:non-identifiability}
Suppose the agent believes to be time-consistent $\subbeta=1$. For every non-decreasing sequence of stopping probabilities $0 < p_1 \leq p_2 \leq \ldots \leq p_T < 1$, every $(\delta,\beta) \in (0,1) \times (0,1]$, and every penalty $\nicefrac{\by}{\beta \delta}<0$, there exists a distribution $F$ that rationalizes the agent's stopping probabilities as the unique outcome of any  perception perfect equilibrium.
\end{theorem}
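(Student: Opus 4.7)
The plan is to apply Tarski's fixed-point theorem to a monotone self-map $\Phi$ on a complete lattice of candidate continuation-value vectors, obtained by composing two monotone steps as sketched in the introduction to Section~\ref{sec:unidentified}. Fix constants $\underline{a} \leq \by/\beta$ and a large $\overline{a}$, and let
\[
V = \left\{ \mathbf{v} = (v_1, \ldots, v_T) \in \RR^T : v_T = \by,\ v_t \in [\by, \beta \overline{a}],\ v_1 \geq v_2 \geq \ldots \geq v_T \right\},
\]
equipped with the componentwise order, so that $V$ is a complete lattice. To each $\mathbf{v} \in V$ I associate a distribution $F_\mathbf{v}$ whose density on $[\underline{a}, \overline{a}]$ is piecewise constant, placing mass $1 - p_T$ uniformly on $[\underline{a}, v_T]$, mass $p_t - p_{t-1}$ uniformly on $(v_t, v_{t-1}]$ for $t = 2, \ldots, T$, and mass $p_1$ uniformly on $(v_1, \overline{a}]$, so that $F_\mathbf{v}(v_t) = 1 - p_t$ by construction. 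I then define $\Phi(\mathbf{v})$ to be the vector of perceived continuation values a fully-naive agent would hold under $F_\mathbf{v}$, computed by the backward induction of Lemma~\ref{lem:rec-representation} with $\hatb = 1$ starting from $\Phi(\mathbf{v})_T = \by$.

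A fixed point $\mathbf{v}^*$ of $\Phi$ directly furnishes the desired rationalization: the continuation values under $F_{\mathbf{v}^*}$ equal $\mathbf{v}^*$, and since $F_{\mathbf{v}^*}(v_t^*) = 1 - p_t$, the actual stopping probability in every period is $p_t$. To invoke Tarski's theorem I need $\Phi$ monotone on $V$, which I obtain by chaining two monotonicities. First, a direct computation of the inverse CDF shows that the assignment $\mathbf{v} \mapsto F_\mathbf{v}$ is FOSD-monotone: the quantile $F_\mathbf{v}^{-1}(1-p_t) = v_t$ shifts weakly upward with $\mathbf{v}$, and the piecewise-linear interpolation of $F_\mathbf{v}^{-1}$ between consecutive knots preserves this inequality, giving $F_{\mathbf{v}'}$ first-order stochastically dominating $F_\mathbf{v}$ whenever $\mathbf{v}' \geq \mathbf{v}$ componentwise. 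Second, Lemma~\ref{lem:aux-properties-naive} asserts that an FOSD increase in the payoff distribution raises the fully-naive agent's perceived continuation value in every period; intuitively, the naive agent views every future self as exponential, so her perceived stopping time is that of a single exponential optimal-stopping problem, for which the standard coupling argument delivers FOSD-monotonicity. Composing these, $\Phi$ is monotone on $V$; Theorem~\ref{prop:monotone-values} applied to the equilibrium under $F_\mathbf{v}$ gives that $\Phi(\mathbf{v})$ is non-increasing in $t$, and the bounds on $\mathbf{v}$ together with the support $[\underline{a},\overline{a}]$ of $F_\mathbf{v}$ keep $\Phi(\mathbf{v})_t$ in $[\by, \beta\overline{a}]$, so $\Phi$ maps $V$ into itself. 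Tarski's theorem then yields a fixed point, and uniqueness of the perception-perfect equilibrium is immediate because the recursive equation determines the continuation values uniquely by backward induction once $F$ is fixed.

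The central analytical obstacle is Lemma~\ref{lem:aux-properties-naive}: the sophisticated-agent counterexample described in the introduction to Section~\ref{sec:unidentified} shows that FOSD-monotonicity of continuation values is delicate and can fail once future selves react strategically to the distribution. Naivete sidesteps this precisely because the current self mistakenly ignores the feedback and treats future cutoffs as those of exponential agents, for which FOSD-monotonicity is classical. Secondary difficulties are the choice of $\underline{a}$ and $\overline{a}$ so that every perceived cutoff $v_t/\beta$ used by the naive agent lies inside the support $[\underline{a}, \overline{a}]$ of $F_\mathbf{v}$ (for this the constraints $\underline{a} \leq \by/\beta$ and $v_t \leq \beta\overline{a}$ suffice), and the handling of the edge cases $p_t = p_{t-1}$, which force the corresponding band $(v_t, v_{t-1}]$ of $F_\mathbf{v}$ to carry zero density and hence zero length (so that some knots of $\mathbf{v}$ must coincide).
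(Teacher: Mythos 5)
Your construction is essentially the paper's: the same piecewise-uniform distribution with bands between consecutive continuation values carrying mass $p_t - p_{t-1}$, the same operator on the complete lattice of non-increasing value vectors, monotonicity obtained by composing the FOSD-monotonicity of the band construction with Lemma \ref{lem:aux-properties-naive}(ii), the self-map property from Theorem \ref{prop:monotone-values}, and Tarski's theorem. The one genuine gap is the uniqueness step. Your justification---that the recursion determines the continuation values uniquely once $F$ is fixed---does not address what uniqueness actually requires here: the theorem asserts that \emph{every} perception-perfect equilibrium yields the stopping probabilities $p$, and this fails if $F$ has an atom at one of the cutoffs, since then a positive-probability event of indifference allows ties to be broken either way and the realized stopping probabilities to differ across equilibria. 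Your construction can produce exactly such an atom at the fixed point: if $v_t^* = v_{t-1}^*$ while $p_t > p_{t-1}$, the band $(v_t^*, v_{t-1}^*]$ degenerates and its mass $p_t - p_{t-1}>0$ collapses onto the common cutoff (your ``edge case'' discussion covers only the harmless converse, $p_t = p_{t-1}$ with distinct knots). The paper closes both holes with part (i) of Lemma \ref{lem:aux-properties-naive}, which you never invoke: since your $F_{\mathbf{v}}$ always satisfies $F_{\mathbf{v}}(\by) \geq 1-p_T > 0$ and $\by < 0$, any fixed point has \emph{strictly} decreasing continuation values, so every positive-mass band is non-degenerate, $F_{\mathbf{v}^*}$ is continuous, indifference occurs with probability zero, and the equilibrium outcome is unique. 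With that citation added, your argument is complete.
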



Our formal proof in the appendix proceeds roughly as follows. Step (i). Fix the agent's time preference as well as period $T$'s continuation value (which equals $\by$). Step (ii). Take an arbitrary $(T-1)$-element vector of  non-increasing continuation values $v_1 \geq v_2 \geq \ldots \geq v_{T-1}$. Step (iii). Here, we generate a payoff distribution for these continuation values that gives the desired stopping probabilities. In particular, we put a probability mass that is equal to the difference in the exogenously given stopping probability between period $t$ and $t+1$ between the corresponding period's perceived continuation values, for simplicity using a uniform density. This step, hence, amounts to mapping continuation values into distributions that lead to the correct stopping probabilities.  
Step (iv). Calculate the actual continuation values that the new payoff distribution from the third step gives rise to. This maps the set of distributions back into the vector of continuation values. By Theorem \ref{prop:monotone-values}, these continuation values are again non-decreasing, and thus the combined function maps a non-increasing sequence of continuation values into a non-increasing sequence of continuation values. Step (v). We show that this function is bounded and maps sequences from an appropriately chosen interval into itself. Furthermore, the function is monotone as higher continuation values lead to a better distribution (in the sense of first-order stochastic dominance) and a better distribution increases the subjective continuation values for an agent who believes to be time-consistent (established in Lemma \ref{lem:aux-properties-naive} $ii)$ below). Thus, the mapping from continuation values into continuation values is a monotone mapping from a complete lattice into a complete lattice, and by Tarski's Theorem admits at least one fixed point.
Any fixed point gives the desired distribution, since by Step (iii) the stopping probabilities are correct and by Step (iv) the continuation values are those consistent with the limit distribution. Furthermore, because by  Lemma \ref{lem:aux-properties-naive} $i)$ below, the continuation values are strictly decreasing when $F(\by)>0$ and $\by<0$, the limit distribution that we construct is continuous, so that the agent's stopping behavior is unique.


As explained in the above sketch, the proof of Theorem \ref{thm:non-identifiability} relies on the following Lemma.

\begin{lemma}\label{lem:aux-properties-naive}
Suppose $\delta<1$ and the agent believes to be time-consistent $\subbeta=1$.
\begin{compactenum}[i)]
\item For every distribution $F$ with $F(\by)>0$ and $\by<0$, the continuation values are strictly decreasing $v_1 > v_2 > \ldots > v_T$.
\item {A first-order stochastic dominance increase in the payoff distribution $F$ increases the vector of subjective continuation values point-wise.}
\end{compactenum}
\end{lemma}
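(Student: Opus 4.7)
The plan is to attack both parts by working with the recursion of Lemma~\ref{lem:rec-representation}, which under $\subbeta=1$ simplifies to $v_t=\phi(v_{t+1})$ for $t<T$ and $v_T=\by$, where
\[
\phi(v) \;=\; \beta\delta \int_{v/\beta}^\infty z\,dF(z) + F(v/\beta)\,\delta\, v \;=\; \beta\delta\,\EE_F\!\bigl[\max(Y,\,v/\beta)\bigr].
\]
Rewriting $\max(Y,v/\beta)=v/\beta+(Y-v/\beta)^+$ yields the key identity
\[
\phi(v)-v \;=\; v(\delta-1)+\beta\delta\,\EE_F\!\bigl[(Y-v/\beta)^+\bigr].
\]
Both summands are nonincreasing in $v$, and since $\delta<1$ the first is strictly so; hence $v\mapsto\phi(v)-v$ is strictly decreasing, $\phi$ admits a unique fixed point $v^\ast$, and $\phi(v)>v$ iff $v<v^\ast$. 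Because the right-hand side of the identity is strictly positive for every $v<0$, we moreover have $v^\ast\ge 0>\by$.

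For part (i), I would prove by backward induction on $t$ that $v_t<v^\ast$ and $v_t>v_{t+1}$ for every $t\le T-1$. The second inequality is immediate from the first together with $\phi(v)>v$ on $\{v<v^\ast\}$. For the first, since $\phi'(u)=\delta F(u/\beta)\ge 0$ and $\phi$ is Lipschitz (with constant $\delta$), the fundamental theorem of calculus gives
\[
v^\ast-v_t \;=\; \phi(v^\ast)-\phi(v_{t+1}) \;=\; \int_{v_{t+1}}^{v^\ast}\!\delta\,F(u/\beta)\,du,
\]
so it suffices that $F(u/\beta)>0$ on a subset of $[v_{t+1},v^\ast]$ of positive measure. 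This is where the hypothesis $F(\by)>0$ enters: together with $\by<0$ it forces $\ell\df\inf\operatorname{supp}(F)\le\by<0$, so $\beta\ell<0\le v^\ast$. For every $u>\beta\ell$ one has $F(u/\beta)>0$ (some mass lies at or below $u/\beta$ by definition of the infimum of the support), and because $v_{t+1}<v^\ast$ by the induction hypothesis, the subinterval $(\max\{v_{t+1},\beta\ell\},v^\ast]$ sits inside the positive-integrand region and has strictly positive length. The main obstacle is precisely this uniqueness-of-preimage step: without $F(\by)>0$, the map $\phi$ could be flat throughout a left neighborhood of $v^\ast$, permitting the orbit $\phi^k(\by)$ to land exactly on the fixed point at some finite $k$ and so break strict monotonicity.

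For part (ii), the substantive observation is that $y\mapsto\max(y,v/\beta)$ is nondecreasing for every fixed $v$, so the standard characterization of first-order stochastic dominance immediately gives $\phi_F(v)\ge\phi_{\tilde F}(v)$ for every $v$ whenever $F\succeq_{\mathrm{FSD}}\tilde F$. Combined with monotonicity of $\phi_{\tilde F}$ in its own argument (once more $\phi_{\tilde F}'(v)=\delta\tilde F(v/\beta)\ge 0$), a backward induction starting from $v_T=\tilde v_T=\by$ yields
\[
v_t \;=\; \phi_F(v_{t+1}) \;\ge\; \phi_{\tilde F}(v_{t+1}) \;\ge\; \phi_{\tilde F}(\tilde v_{t+1}) \;=\; \tilde v_t,
\]
which is the pointwise dominance of continuation values claimed in the lemma.
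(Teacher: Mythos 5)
Your proof is correct, and the two parts fare differently against the paper. Part (ii) is essentially the paper's own argument: write the $\subbeta=1$ recursion as $v_t=\beta\delta\,\EE_F[\max(Y,v_{t+1}/\beta)]$, use that the integrand is nondecreasing in $y$ (giving monotonicity in $F$ under first-order stochastic dominance) and nondecreasing in the cutoff, and induct backward from $v_T=\tilde v_T=\by$. Part (i), however, takes a genuinely different route. The paper first imports weak monotonicity from Theorem \ref{prop:monotone-values} and then argues by contradiction: if $v_{t-1}=v_t$ for some $t$, strict monotonicity of the one-step map above $\min(\operatorname{supp}F)$ forces \emph{all} continuation values to coincide, hence to equal $\by$, making $\by<0$ a fixed point of the map --- impossible since (Lemma \ref{lem:prop-g}) every fixed point is nonnegative. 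You instead run a self-contained backward induction on the pair of claims $v_t<v^\ast$ and $v_t>v_{t+1}$, where $v^\ast\ge 0$ is the unique fixed point of $\phi$ obtained from the decomposition $\phi(v)-v=v(\delta-1)+\beta\delta\,\EE_F[(Y-v/\beta)^+]$, and you extract the strict gap quantitatively as $v^\ast-v_t=\int_{v_{t+1}}^{v^\ast}\delta F(u/\beta)\,du>0$. Both proofs ultimately rest on the same two structural facts --- the fixed point of the one-step map is nonnegative while $\by<0$, and $F(\by)>0$ guarantees mass below the relevant cutoffs --- but yours does not route through Theorem \ref{prop:monotone-values} and yields an explicit positive lower bound on each increment, at the cost of the (correctly justified) absolute-continuity step behind the identity $\phi(b)-\phi(a)=\int_a^b\delta F(u/\beta)\,du$; the paper's version is shorter because it recycles the weak inequality it has already established.
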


Part $i)$ shows that whenever there is a positive probability that the utility from completing the task in the final period before the deadline $\by$ is less than that from not completing the task, an agent who believes to be time-consistent (i.e. who has beliefs $\hatb=1$) has a {\it strictly} positive willingness to pay for extending the deadline. Here, the assumption that $F(\by)>0$ and $\by<0$ rules out that it is optimal for the agent to always complete the task immediately.\footnote{As a trivial counterexample to the finding when the assumption is dropped, suppose the task yields a (net) positive deterministic payoff above $\by$. Then the agent would always complete the task immediately, and hence is unwilling to pay for extending the deadline.} Thereby, it allows us to strengthen the finding of Theorem \ref{prop:monotone-values} for the case of $\subbeta=1$.
 
The second part of the Lemma shows that any improvement in the payoff distribution weakly increases the subjective continuation values in all periods. Obviously, for a time-consistent agent an improvement in the payoff distribution raises the second to last period's continuation payoff. Furthermore, from the third to last period's perspective, the increase in the payoff distribution and the penultimate period's continuation value, makes it more desirable to reach the second to last period, that is increases its continuation value; etc... . And because an agent with beliefs $\hatb=1$ thinks she is time-consistent from tomorrow on, it similarly increases her continuation values.

While economically we do not believe that the restriction to fully naive or actually time-consistent agents (with $\hatb =1$) is important for Theorem \ref{thm:non-identifiability} to hold, our mathematical proof uses this assumption when arguing that subjective continuation values increase in a first-order-stochastic dominance shift in the payoff distribution, which in turn allows us to use Tarski's Theorem. In general, due to a time-inconsistent agent's the conflict of interest between her different selves, a first-oder-dominance improvement of her payoffs need not raise subjective continuation values as the following example highlights.

\begin{example}[A sophisticated $\beta,\delta$-agent can prefer a fixed uniformly payoff-reducing tax]\label{ex:tax-sophisticate}

 Let $\beta = 1/8$ and the agent be sophisticated ($\subbeta=\beta$). To simplify the calculation, we set $\delta =1$ but the argument obviously extends to $\delta$ sufficiently close to $1$.  We compare the agent's expected welfare and (subjective) continuation values in a three-period voluntary-task-completion problem across two scenarios.\footnote{Because even the lowest payoff from completing the task is positive, the agent always completes the task voluntarily. Our results, thus, remain unchanged if task completion becomes mandatory.} One without a tax, and one in which the agent has to pay a fixed utility tax of $1/8$ in the period in which she completes the task. Let the distribution $F$  of payoffs absent a tax be such that with probability $3/4$ the agent receives a payoff of $3/2$, and with the remaining probability of $1/4$ the agent receives a payoff of $1/4$. Straightforward calculations (see the Supplementary Appendix) establish that the agent strictly prefers the tax to the no tax situations and that the tax increases the first-period continuation value.
 \end{example}
 
Note that the tax introduced in Example \ref{ex:tax-sophisticate} is the same independent of when the agent completes the task and in that sense is not tailored to punish an agent for giving in to early temptations. Intuitively, nevertheless, the tax in the above example lowers the temptation to quit immediately in period 2 as it reduces the benefits from doing so. As a result, the agent obtains a commitment device to only stop when payoff are high in the second or first period. The benefits thereof overcompensate the direct payoff reduction through the tax, and thereby raise earlier periods' continuation values. 

Lemma  \ref{lem:aux-properties-naive} and Example \ref{ex:tax-sophisticate} jointly imply that one can (sometimes) identify agents that believe to have self control problems $(\hatb<1)$: such an agent can have a strictly positive willingness to pay to make his payoff distribution strictly worse. In contrast, an agent who believes to be time-consistent ($\hatb =1$) and hence does not foresee future self-control problems will never want to do so. 


\subsection{Known Expected Value and Variance}
\label{subsec:moments}

For our very general results, we have not restricted the class of permissible distribution functions. One may hope to rule out time-consistency and find evidence through restricting features of the distribution. Perhaps the most natural way of doing so would be two make restrictions regarding the moments of $F$; for example, an analyst may have an idea regarding the possible expected net benefit of doing the task---that is regarding the mean of $F$---or may be willing to impose that net benefits do not vary to much between periods (restricting the variance of $F$).

We now briefly observe that if the penalty is unobservable, even with a priori knowledge of the mean and variance of $F$ it is impossible to rule-out time-consistent behavior. To see this, consider an agent for whom $\beta =\delta =1$. Theorem \ref{thm:non-identifiability-sophisticate} implies that there exists a net benefit distribution $F$ that rationalizes any increasing profile of stopping probabilities. Furthermore, in this case the recursive formulation of the problem in Lemma \ref{eq:rec-representation} simplifies to
$$
v_t = \EE \left[ \max \{ y_{t+1},v_{t+1} \} \right] \ \ \ \ \text{ for all } t <T.
$$
Hence, if the distribution $F$ together with the penalty $\by$ rationalize the data, so does the distribution $F + \kappa$ together with the penalty $\by +\kappa$ for any $\kappa \in \RR$. In other words, we can always select a net benefit distribution with a given expected value. Furthermore for any $\kappa_2 >0$, the stopping behavior remains optimal if we scale the net-benefits and $\by$ by $\kappa_2$. This implies that we can not only select a distribution with a given mean but that we can at the same time select any desired variance and explain the observed stopping behavior.\footnote{Indeed, since the construction of $F$ in the proof of Theorem  \ref{thm:non-identifiability-sophisticate} uses bounded support, we can rationalize the observed stopping behavior as resulting from a patient agent ($\beta =\delta =1$) whose net benefits vary arbitrarily little.}

\begin{corollary} 
Suppose the agent is time-consistent and fully patient $\subbeta=\beta=\delta=1$. For every non-decreasing sequence of stopping probabilities $0 < p_1 \leq p_2 \leq \ldots \leq p_T < 1$, and every $\mu \in \RR$ and  $\sigma^2>0$, there exists a distribution $F$ with mean $\mu$ and variance $\sigma^2$ and a penalty $\by$ that rationalizes the agent's stopping probabilities as the (unique) outcome of a  perception perfect equilibrium.
\end{corollary}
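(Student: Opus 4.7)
The plan is to combine Theorem \ref{thm:non-identifiability-sophisticate} with an affine invariance of the $\beta=\delta=1$ stopping problem. First, I would invoke Theorem \ref{thm:non-identifiability-sophisticate} with parameters $\beta=\delta=1$ to obtain a distribution $F_0$ and penalty $\by_0$ that rationalize the stopping probabilities $p_1 \leq \ldots \leq p_T$. Let $\mu_0$ and $\sigma_0^2$ denote the mean and variance of $F_0$. A preliminary observation is that $\sigma_0^2 > 0$: if $F_0$ were a point mass, then for every choice of cutoff $v_t$ the event $\{y_t > v_t\}$ would have probability $0$ or $1$, contradicting that some $p_t \in (0,1)$.

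Next, I would define the target affine map $\phi(y) = \kappa_2 y + \kappa$ with $\kappa_2 = \sigma/\sigma_0 > 0$ and $\kappa = \mu - \kappa_2 \mu_0$, and set $F$ to be the law of $\phi(Y_0)$ for $Y_0 \sim F_0$, together with the transformed penalty $\by = \phi(\by_0)$. By construction, $F$ has mean $\mu$ and variance $\sigma^2$.

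The substantive step is to check that $(F,\by)$ still rationalizes $(p_1,\ldots,p_T)$. Because $\beta=\delta=\subbeta=1$, Lemma \ref{lem:rec-representation} reduces to the standard Bellman equation
\[
v_t = \EE_F[\max\{y_{t+1}, v_{t+1}\}] \qquad (t<T), \qquad v_T = \by,
\]
and the stopping probability in period $t$ is $p_t = 1 - F(v_t)$. Letting $v_{t,0}$ denote the continuation values under $(F_0, \by_0)$, I would prove by backward induction that $v_t = \kappa_2 v_{t,0} + \kappa$. The base case $t=T$ is immediate from $\by = \phi(\by_0)$. For the inductive step, using $\kappa_2 > 0$,
\[
v_t = \EE_F[\max\{y_{t+1},v_{t+1}\}] = \EE_{F_0}[\max\{\phi(Y_0),\phi(v_{t+1,0})\}] = \kappa_2 \, \EE_{F_0}[\max\{Y_0, v_{t+1,0}\}] + \kappa = \phi(v_{t,0}).
\]
Because $\phi$ is strictly increasing, $F(v_t) = F_0(v_{t,0})$, so $1 - F(v_t) = p_t$ for every $t$, as required.

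The argument has no real obstacle beyond verifying the affine invariance; the only subtlety worth flagging is the strict positivity $\sigma_0^2 > 0$ (handled above) which is needed so that $\kappa_2$ is well-defined and positive. I would also remark that the same construction continues to work when $\sigma^2$ is allowed to be arbitrarily small, so the result says in particular that one can rationalize the data by a time-consistent, fully patient agent whose net benefits are concentrated in an arbitrarily tight window around any prescribed mean, which is the sharpest form of the non-identification message.
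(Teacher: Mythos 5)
Your proposal is correct and follows essentially the same route as the paper: invoke Theorem \ref{thm:non-identifiability-sophisticate} with $\beta=\delta=1$ and then exploit the affine equivariance of the Bellman recursion $v_t = \EE[\max\{y_{t+1},v_{t+1}\}]$ under a joint shift-and-scale of the payoff distribution and the penalty. Your explicit check that $\sigma_0^2>0$ and the backward induction $v_t=\phi(v_{t,0})$ merely spell out details the paper leaves implicit.
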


\section{Non-Parametric Identification with Richer Data}

\label{sec:rich data}

Above, we established that stopping data by itself is insufficient to test for time preferences. A natural question is whether richer data allows the analyst to learn about the agent's time-preferences. To do so, the analyst needs to disentangle whether the stopping behavior is driven by a desire to delay incurring costs or by the option value of drawing a better payoff in the future. Observe that in the latter case, a considerable option value requires payoff to differ significantly. Hence, as the deadline approaches and a waiting agent faces fewer future draws, the continuation value should drop considerably. In contrast, even with a (relatively) constant option value, an agent who is present biased is willing to delay a costly activity to the last minute. Thus, observing, in addition to task-completion times,  continuation values directly should facilitate the non-parametric identification of $\delta,\beta,\subbeta$. We, thus, analyze how much the analyst can learn when also observing the continuation values.

More formally, consider the case in which the analyst observes the agent's stopping behavior (infinitely often) as well as his exact willingness to pay for continuing with the task. Conceptually, the analyst could elicit this information by selecting some stopping problems in which she offers the agent a mechanism at the end of period $t$ that truthfully elicits her willingness to pay for continuing with the task from $t+1$ onwards.\footnote{If the analyst sees infinitely many identical agents, she can randomly select $T$ agents. Label these agents $k=1, \ldots, T$. At the end of period $k$, the analyst then elicits agent $k$'s willingness to pay for facing the task-completion problem from period $k+1$ to $T$. She can do so using a standard  Becker-De Groot-Marschak mechanism \citep{beckerdegroot}. } Denote the amount she is willing to pay at the end of period $t$ by $m_t$. If the agent's utility is quasi-linear in money, which is a good approximation in the standard hyperbolic discounting model whenever the involved stakes are relatively small---as in the case of parking tickets---, then observing $m_t$ is equivalent to observing the continuation value $v_t$; otherwise, $v_t = u(m_t)$ for some monotonically increasing utility function $m_t$. We provide an exact analytical result regarding partial identification for the case of linear utility in money and a sophisticated agent. But, we also highlight that---at the cost of having to use numerical methods common in empirical work to solve for the admissible parameter range---our results can be readily extended in multiple directions, including partial naivete and non-linear utility in money. Importantly, below we also point out that our procedure identifies the time-preferences over effort even if the agent discounts money---due to time-preferences or the ability to borrow or save---differently than effort, which implies that our time-preference identification is robust to standard criticisms of eliciting time preferences using monetary choices \citep{augenblickniederle,ericsonlaibsonreview,ramsey}.

As a preliminary observation, recall that Theorem \ref{prop:monotone-values} and Corollary \ref{cor:monotone-stopping} imply that the elicited continuation values must be non-increasing and the observed stopping probabilities non-decreasing. We refer to data $v,p$ that has these properties as {\it plausible}.\footnote{If $\bar{y}$ is observable then in addition we require that $v_T = \bar{y}$.} Any data that is not plausible cannot be justified by our quasi-hyperbolic setup. 
Imposing that the agent is sophisticated, we now show how to non-parametrically identify the set of $\beta,\delta$ that are consistent with the observed data.  Using Lemma \ref{lem:rec-representation} and the fact that an agent stops whenever his payoff is strictly above the continuation value, for a sophisticate the continuation values $v$ and conditional stopping probabilities $p$ must satisfy
\begin{align}\label{eq:constraints-sophisticate}
\begin{aligned}
	v_t &=u(m_t)  \ \  &&\text{ for all } t \in \{ 1,\ldots, T\}\, ,\\
	\int_{v_{t+1}}^\infty z \,d F(z)  &=  \frac{\delta^{-1}\,v_t - (1-p_{t+1})  \, v_{t+1}}{\beta} \ \ \  &&\text{ for all } t \in \{ 1,\ldots, T-1\}\, ,\\
	1 -  F(v_t) &= p_t \ \  &&\text{ for all } t \in \{ 1,\ldots, T\}\,.
\end{aligned}
\end{align}
Conversely, if a pair $u,F$ satisfies \eqref{eq:constraints-sophisticate} for a given plausible data set, then  Lemma \ref{lem:rec-representation} implies that it gives rise to a perception perfect equilibrium for a sophisticated agent.

Note that the right-hand-side of \eqref{eq:constraints-sophisticate}  is given by the data and hypothesized values of $\beta$ and $\delta$. Thus, the data is consistent with a given pair $\beta,\delta$  if and only if there exists a distribution $F$ that solves \eqref{eq:constraints-sophisticate}. As a preliminary step, we show that whenever \eqref{eq:constraints-sophisticate} admits a solution, it also admits a solution that is a distribution consisting of $T+1$ mass points.  

\begin{lemma}\label{lem:mass_points_sufficient}
Whenever \eqref{eq:constraints-sophisticate} admits a solution for a plausible data set, there exists a solution $F$ that consists of exactly $T+1$ mass points located at $(\pi_0, \ldots,\pi_T)$ that satisfy
$$
\pi_0 \leq v_T < \pi_1 \leq v_{T-1} <  \ldots \leq \pi_{T-1} \leq v_1 < \pi_T,
$$ 
with associated probabilities $f_k=\PP[y = \pi_k] $ given by
\begin{equation*}
	f_k = \begin{cases} 1-p_T &\text{ if } k = 0\\
p_{T-k+1} - p_{T-k} &\text{ if } k \in \{ 1,\ldots,T-1\}\\
p_1  &\text{ if } k = T
\end{cases}\,.\
\end{equation*}

\end{lemma}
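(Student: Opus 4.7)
The plan is to start from any solution $F^\star$ of \eqref{eq:constraints-sophisticate} (which exists by hypothesis) and to coarsen it by replacing, on each interval cut out by the observed continuation values, the restriction of $F^\star$ with a point mass located at the conditional mean. Because plausibility gives $v_1 \geq v_2 \geq \ldots \geq v_T$, the half-open intervals
\[
I_0=(-\infty,v_T], \qquad I_k=(v_{T-k+1},v_{T-k}] \text{ for } k\in\{1,\ldots,T-1\}, \qquad I_T=(v_1,\infty)
\]
partition $\RR$. On each $I_k$ of positive $F^\star$-mass, I would set $f_k=\PP_{F^\star}[y\in I_k]$ and $\pi_k=\EE_{F^\star}[y\mid y\in I_k]$; on any $I_k$ with zero mass, set $f_k=0$ and pick $\pi_k$ arbitrarily in $\overline{I_k}$.

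The weights and ordering then fall out immediately. Using the third equation of \eqref{eq:constraints-sophisticate} to get $F^\star(v_t)=1-p_t$, one computes $f_0=F^\star(v_T)=1-p_T$, $f_k=F^\star(v_{T-k})-F^\star(v_{T-k+1})=p_{T-k+1}-p_{T-k}$ for intermediate $k$, and $f_T=1-F^\star(v_1)=p_1$, matching the claimed formula. The interlacing $\pi_0\leq v_T<\pi_1\leq v_{T-1}<\ldots\leq \pi_{T-1}\leq v_1<\pi_T$ holds because the conditional mean on a half-open interval $(a,b]$ with positive mass lies in $(a,b]$.

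The key step, and the place where technical care is required, is verifying that the coarsened discrete distribution $F$ supported on $\{\pi_k\}$ with weights $\{f_k\}$ still satisfies the full system \eqref{eq:constraints-sophisticate}. The first line is free of $F$. The third line holds by telescoping, since $F(v_t)=\sum_{k=0}^{T-t}f_k=1-p_t$. For the second (and main) line, the defining property of a conditional expectation gives $\int_{I_k} z\,dF^\star(z)=f_k\pi_k$ for each $k$. Summing over the intervals whose union is $(v_{t+1},\infty)$, namely $I_{T-t}\cup I_{T-t+1}\cup\cdots\cup I_T$, yields
\[
\int_{v_{t+1}}^\infty z\,dF^\star(z)=\sum_{k=T-t}^T f_k\pi_k=\int_{v_{t+1}}^\infty z\,dF(z),
\]
where the last equality uses that $\pi_k>v_{t+1}$ precisely for $k\geq T-t$. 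Since the right-hand side of the second line of \eqref{eq:constraints-sophisticate} depends only on the unchanged data $(v,p)$, it continues to hold for $F$, and $F$ is thus the desired solution.

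A minor point of hygiene concerns the non-generic situation in which some $v_s$ coincide: if $v_{s+1}=v_s$ then $I_{T-s}$ is empty, but $F^\star(v_{s+1})=F^\star(v_s)$ combined with the third equation forces $p_{s+1}=p_s$, so $f_{T-s}=0$ and the placement of $\pi_{T-s}$ is immaterial to the distribution. Under the strict interlacing asserted in the lemma the construction automatically yields exactly $T+1$ distinct mass points in the claimed order.
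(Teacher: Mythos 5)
Your construction---replacing $F^\star$ on each interval cut out by the continuation values with a point mass at the conditional mean, reading the weights off from $1-F^\star(v_t)=p_t$, and verifying the second line of \eqref{eq:constraints-sophisticate} by summing the conditional-expectation identities over the intervals above $v_{t+1}$---is exactly the argument in the paper, and it is correct. Your extra care about zero-mass intervals and coinciding continuation values is a harmless refinement of the same proof.
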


%

Intuitively, two distributions give rise to the same stopping probability when the probability mass above the continuation values is the same. And the only things that matters for the option value of waiting is the probability with which the agent stops at given future point in time and the expected payoff conditional on doing so. By moving the probability mass between any two continuation values to the expected payoff conditional on falling between these two values, thus, the incentives to wait are unaltered. Furthermore, because the observed stopping probabilities determine the continuation mass between any two continuation values, the question of whether the analyst can non-parametrically match the observed data for a given $\beta, \delta$ boils down to the question of whether she can do so by choosing a distribution consisting of $T+1$ mass points in the appropriate intervals. 

Conceptually, Lemma \ref{lem:mass_points_sufficient} hence allows the analyst to search over a finite dimensional rather than an infinite-dimensional space of possible distribution. Indeed, under the distributional restriction given by the lemma, \eqref{eq:constraints-sophisticate} becomes a non-linear system with finitely many real-valued unknowns. Theorem \ref{thm:non-para_identification}, which we prove in the Appendix, shows that this system can be transformed into a simple set of transparent inequalities that identify the values of $\delta$ and $\beta$ that are consistent with the observed stopping behavior and elicited continuation values.

\begin{theorem}[Non-Parametric Identification] \label{thm:non-para_identification} Suppose $u(m_t) =m_t$ for all $t$ and that $p_1 >0$.\footnote{We require $p_1>0 $ only to simplify the statement. \comments{check carefully}} Plausible data $(v,p)$ is consistent with $\beta,\delta$ and sophistication $\subbeta=\beta$ if and only if (i)
\begin{equation}\label{eq:beta-at}
	\beta < \frac{\delta^{-1}\,v_1 - (1-p_{2})  \, v_{2}}{v_{2} (p_2 - p_1) + v_1 p_1} \, \nonumber
\end{equation}
and (ii)  $ v_{t+1} \beta  <  v_{t+1} a(\delta,t) \leq v_t \beta$ for all $t \in \{2, \ldots, T-1\}$, where 
$$
a(\delta,t) =   1 - \frac{\delta^{-1} (v_{t-1} -v_t) - (1-p_t) (v_t - v_{t+1})  }{ v_{t +1}(p_{t+1} - p_t)}.
$$
\end{theorem}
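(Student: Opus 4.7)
The plan is to invoke Lemma \ref{lem:mass_points_sufficient} to reduce the search over all distributions $F$ solving \eqref{eq:constraints-sophisticate} to a search over finitely many mass-point locations $\pi_0,\ldots,\pi_T$ with prescribed weights $f_k$ in prescribed intervals. Once in this form, the integral constraint becomes a finite linear system, and I would solve it explicitly and translate the support constraints into the inequalities (i)--(ii).

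Concretely, by Lemma \ref{lem:mass_points_sufficient} the data is consistent with $(\beta,\delta)$ under sophistication iff there exist
$$\pi_0 \leq v_T < \pi_1 \leq v_{T-1} < \cdots \leq \pi_{T-1} \leq v_1 < \pi_T$$
such that the discrete distribution with $\PP[y=\pi_k]=f_k$ satisfies \eqref{eq:constraints-sophisticate}. Because the support ordering forces $\pi_k > v_{t+1}$ exactly when $k \geq T-t$, the middle equation in \eqref{eq:constraints-sophisticate} reduces for each $t \in \{1,\ldots,T-1\}$ to
$$\sum_{k=T-t}^{T} \pi_k f_k \;=\; \frac{\delta^{-1} v_t - (1-p_{t+1})\,v_{t+1}}{\beta}.$$
Subtracting the $(t-1)$-th equation from the $t$-th for $t \in \{2,\ldots,T-1\}$ collapses the left-hand side to the single term $\pi_{T-t}\,f_{T-t} = \pi_{T-t}(p_{t+1}-p_t)$, and after regrouping the resulting right-hand side around $v_{t+1}$ one recognizes the factor $v_{t+1}(p_{t+1}-p_t)\,a(\delta,t)$. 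Thus $\pi_{T-t} = v_{t+1}\,a(\delta,t)/\beta$, and the location constraint $v_{t+1} < \pi_{T-t} \leq v_t$ translates, upon multiplication by $\beta>0$, into exactly condition (ii).

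The remaining two locations $\pi_{T-1}$ and $\pi_T$ (which differencing cannot pin down) are constrained only by the $t=1$ equation
$$\pi_T p_1 + \pi_{T-1}(p_2 - p_1) \;=\; \frac{\delta^{-1} v_1 - (1-p_2)\,v_2}{\beta}$$
together with $v_2 < \pi_{T-1} \leq v_1$ and $v_1 < \pi_T$. Viewing $\pi_T$ as a linear function of $\pi_{T-1}$ and using that the coefficient $p_2-p_1$ is non-negative, admissibility is tightest in the limit $\pi_{T-1} \downarrow v_2$; imposing $\pi_T > v_1$ at this limit and rearranging (using $p_1>0$) yields precisely the strict inequality in condition (i). The free mass point $\pi_0$ can be placed anywhere in $(-\infty, v_T]$ and plays no further role.

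The main obstacle is the algebraic simplification in the differencing step: one must verify that after subtracting the two sum-equations the four terms $\delta^{-1}(v_t-v_{t-1})$, $(1-p_t)v_t$, $(1-p_{t+1})v_{t+1}$, $(p_{t+1}-p_t)v_{t+1}$ reorganize exactly into the compact form $v_{t+1}(p_{t+1}-p_t)\,a(\delta,t)$. A secondary subtlety is the degenerate case $p_{t+1}=p_t$, in which $f_{T-t}=0$ makes $\pi_{T-t}$ irrelevant for the system; one checks that condition (ii) then holds trivially (both inequalities collapse, with $a(\delta,t)$ itself becoming indeterminate) and no additional constraint on $\beta$ emerges.
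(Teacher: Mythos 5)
Your proposal is correct and follows essentially the same route as the paper's proof: invoke Lemma \ref{lem:mass_points_sufficient} to reduce to the finite system with prescribed weights $f_k$, difference consecutive sum-equations to pin down $\pi_{T-t}=v_{t+1}a(\delta,t)/\beta$ and read off condition (ii) from the support ordering, and treat the $t=1$ equation with $\pi_T$ free (using $f_T=p_1>0$) to obtain condition (i). Your explicit handling of the degenerate case $p_{t+1}=p_t$ is a small point the paper leaves implicit, but otherwise the arguments coincide.
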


The theorem provides an exact characterization of what time-preference parameters are consistent with the observed rich data. To illustrate its implications, consider the example from Section \ref{sec:example} in which $T=5$, the agent's payoff of completing the task are uniformly distribiuted over $[-1,1]$, and the agent is time-consistent with $\beta=\delta=1$ (this is the setup of Example 2). We illustrate the set of parameters the analyst can identify non-parametrically for $T=5$ and $T=20$ in Figure \ref{fig:example_np}. It is immediate that---in contrast to the case of unobservable continuation values---not all parameter combinations $\beta, \delta$ are consistent with the data.

Figure \ref{fig:example_np}, however, also illustrates that even if the analyst correctly imposes that $\delta =1$, she cannot make precise inference in the case where $T=5$. Indeed, in the example any $\beta$ between $0.82$ and $1.28$ is consistent with the data.
 This changes drastically for $T=20$ in which case $\beta$ is tightly identified once $\delta =1$ is imposed. Without imposing $\delta=1$, however,  the inference about $\beta$ remains imprecise even in the case of $T=20$, as it is impossible to reject $\beta=0.84$. Overall,  the example suggests that rich data---including a significant number of continuation values---are needed for tight parameter estimates.

What allows the analyst to separate the option-value-from-waiting based reason for delaying the task from time-preference-based ones with a rich enough data set? If the agent is patient, he will only delay completing the task with high probability in case he expects a better draw with high probability. This implies that there needs to be considerable variation in the underlying payoff distribution. But then as the deadline moves closer, the agent foresees getting less and less draws, which means the option values quickly drops. In contrast, if time preferences are the underlying reason for delaying, the continuation value will drop much more slowly as the deadline approaches. The additional data on continuation values, hence, allows for set identification of the preference parameters. 
\begin{figure}
	\begin{center}
	  \includegraphics[width=0.7\linewidth]{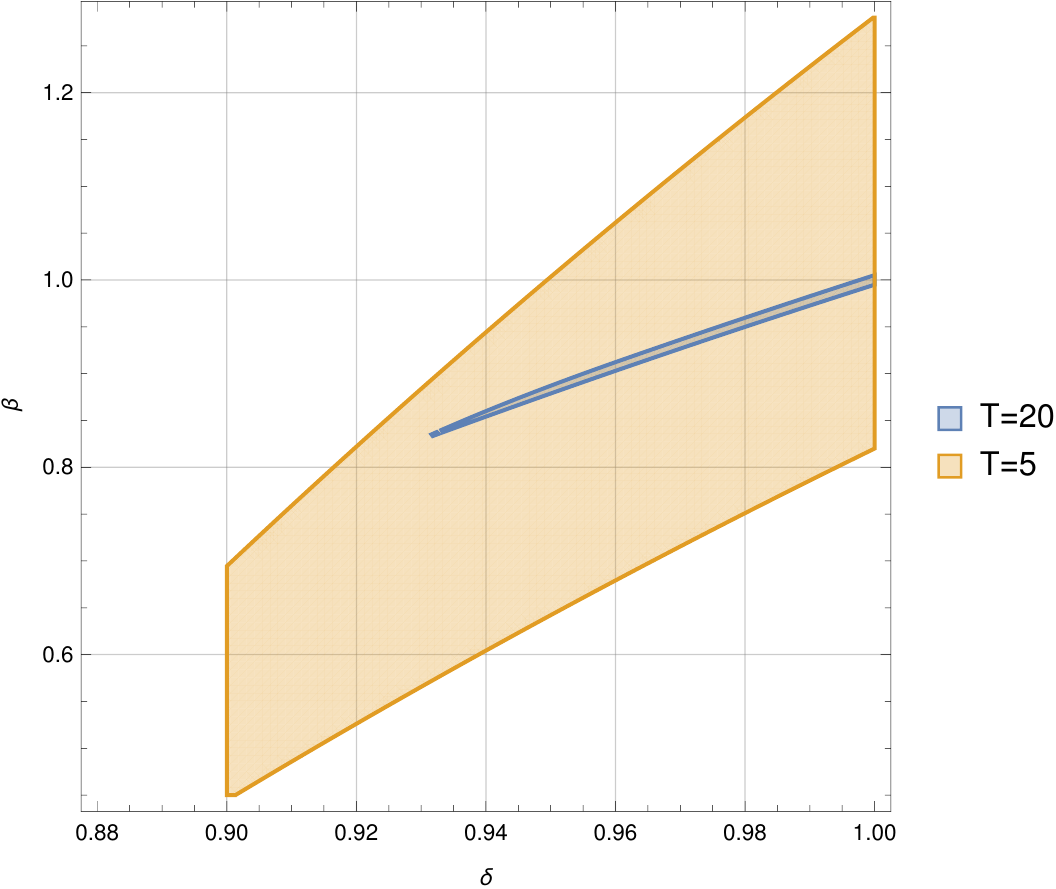}
		\caption{\label{fig:example_np}The above figures illustrates the set of parameters $\beta,\delta$ that the analyst can non-parametrically identify if  she correctly imposes that the agent's instantaneous utility is linear in money. The agent's true values of completing the task are uniformly drawn from $[-1,1]$ and she is time-consistent with $\beta=\delta=1$. In yellow is the case of $T=5$ periods of data and in blue $T=20$ periods of data.}
	\end{center}
\end{figure}
Since it is the change in option value that allows identification, one can also use other related data. For example, the willingness to pay for extending the deadline reflects the drop in continuation value, and therefore would also give rise to a rich data set that would allow non-parametric set identification. Again, however, our example suggests that many such observations are needed, suggesting that a tight estimation of agents' time inconsistency requires ``extremely rich'' task-completion data.

\paragraph{Generalizations of this Methodolgy} We think of the Theorem \ref{thm:non-para_identification}  as a proof of concept, and analysts can adopt it to the data  at hand and the assumption they are willing to make. For example, it is in principle straightforward to adopt the above analysis to allow for partial naivete. In that case, however, one needs to be careful to account not only for the probability mass and expectation of falling between two actual continuation values but also differentiate whether a given probability mass falls above or below the anticipated continuation values $c_t$. An analog to Lemma \ref{lem:mass_points_sufficient} implies that this can be done with $2T+1$ mass points. In this case, however, for intervals that are bounded by anticipated and not actual continuation values, the probability that $y_t$ falls into this interval is unknown. As a result, the analyst needs to choose both the mass point and the weight on it (with the appropriate constraints from the observed stopping behavior), giving rise to quadratic constraints. While this can be solved numerically using standard techniques, a simple transparent closed-form solution as in the case of Theorem \ref{thm:non-para_identification} is unavailable. Similarly, because we only need to consider a finite number of mass points, one can allow for non-linear utility in money, which---imposing that utility is increasing in money---requires the analyst to choose increasing utility values $u(m_t)$ in addition to the mass points.\footnote{If the analyst wants to impose risk-aversion in money, this adds simple (linear)  constraints that ensure that the slope of $u$ is non-increasing in $m_t$. Again, this can be solved using standard numerical techniques.}

\paragraph{Time-Preferences over Money} One important aspect of our procedure is that it does not (explicitly or implicitly) impose constraints on how the agent handles monetary payments at different points in time. It is sufficient for contemporaneous utility to be separable in money, and the marginal utility of receiving money to be the same across periods. This assumption is consistent with an intertemporal set-up in which the agent can borrow and lend at given interest rates---in which case the interest rate determines how she trades off monetary payments at different points in time \citep{ericsonlaibsonreview,ramsey}. But it is also consistent with an agent narrow bracketing and consuming small monetary payments immediately---or reasoning as if she does so---so long as she trades of money and effort consistently over time. The procedure outlined in this section thus works for either specification of the agent's time preferences over monetary payments.

\section{Discussion}
\label{sec:discussion}

Our results establish a strong form of non-identifiability in that---absent data on continuation values---even with ideal stopping data in which the analyst observes the exact stopping probability for each individual separately, without parametric assumptions nothing can be learned regarding the agent's discount factor, taste for immediate gratification, or degree of sophistication. In reality, an analyst is likely to observe a large group of agents and infer their average stopping probability; if the group is homogenous our analysis applies. If individuals, however, in addition differ in their unobservable payoff distribution or time preferences, the analyst's problem becomes even more difficult. In that case, for example, it is easy to generate non-monotone stopping probabilities for the overall population. As a simple example, suppose there are two types of agents in the population that face a three-period mandatory task-completion problem. The first type stops in each period with probability 1, while the second type only stops in the final period. If $\alpha >0$ is the fraction of the first type, then the aggregate stopping probability is $\alpha$ in the first period, $0$ in the second, and $1- \alpha$ in the final period, which is clearly non-monotone.\footnote{See \cite{heffetzodonoghue} for a more detailed discussion of heterogeneity as well as empirical evidence on its importance in determining when individuals pay their parking fines.}

Importantly, we establish our formal result for the specific task-completion setting analyzed, and they should not be misconstrued as implying complete non-identifiability of the quasi-hyperbolic discounting model in other settings. In richer and different datasets, it is possible to identify $\beta,\hat{\beta}$ more directly. For example, lotteries (or contracts) that payoff differently depending on the agent's own future behavior can be used to reveal whether the agent missperceives her own future behavior and, hence, whether she is (partially) naive in the quasi-hyperbolic discounting model \citep[see, for example,][]{dellavignamalmendier,Spiegler_2011_book}. Similarly, if the agent is willing to pay for reducing her choice set or for imposing a fine for certain future actions, she values commitment and---within the quasi-hyperbolic discounting framework---must be time-inconsistent \citep[see, for example,][]{strotz}. Such identification strategies, however, rely on data that is fundamentally different from the task-completion data for which we establish the impossibility of non-parametric identification. 

Indeed, even in the closely related, but different, problem of task-timing \citep{carrollchoi,laibson3} in which the benefit from doing the task start accumulating as soon as the agent finishes it, it is possible to construct examples in which an agent wants to commit to an earlier deadline, implying that at least partial identification of perceived present-bias ($\hatb \neq 1$) is theoretically feasible. While agents may theoretically benefit from imposing a deadline in such task-timing problems, however, the calibration of the example in \cite{laibson} suggests that their willingness to do so is small, suggesting that identifying time-inconsistency may nevertheless be challenging in real-world data.

The broader economic lesson from our analysis is that conclusions about time-preferences can quickly be driven by seemingly innocuous parametric assumptions. Our results on set-identification with richer data illustrate, however, that it is possible---and in our setting surprisingly easy---to avoid functional form assumptions. We, thus, think of these results as a proof of concept for the feasibility of non-parametric analysis within the quasi-hyperbolic discounting framework.

Finally, let us emphasize the obvious fact: even though present-bias is non-identifiable in our task-completion settings absent data on continuation values, present-bias may still be a major driver for the wide-spread observation that agents complete tasks last minute. Our results simply caution that the observed task-completion behavior in these settings on its own  is not enough to conclude that present-bias is widespread.

\section*{Appendix}

Define the function $g:\RR \to \RR$ as
\begin{equation}\label{eq:def-g}
	g(w) = \subbeta \, \delta \int_{w}^\infty z \,d F(z) + F(w) \, \delta \, w\,.
\end{equation}
As the following lemma formally establishes, $g$ has a number of convenient properties.
%
%
\begin{lemma}\label{lem:prop-g} The function $g$ has the following properties:
\begin{compactenum}[i)]
\item For all $t \in \{1,\ldots,T-1\}$, the perceived continuation values satisfy $ \left(\nicefrac{\subbeta}{\beta}\right)\,\,v_{t} = g\left( \left(\nicefrac{\subbeta}{\beta}\right)\,\,v_{t+1} \right)$.
\item $g(w)$ is non-decreasing for $w \geq 0$, is right-continuous, and has only upward jumps.
\end{compactenum}
Let $\delta<1$. Then $g$ has the following additional properties:
\begin{compactenum}
\item[iii)] $g(w) > w$ for all $w < 0$ and there exists $\bar{w}>0$ such that $g(w)< w$ for all $w>\bar{w}$. 
\item[iv)] Let $w^\star = \inf \{ w \in \RR \colon g(w) \leq w\}$. Then $w^\star$ satisfies $g(w^\star)=w^\star$ and $w^\star \geq 0$.
\item[v)] If $w' \geq 0 > w$,  then $g(w') \geq g(w).$
\end{compactenum}
\end{lemma}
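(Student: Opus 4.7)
My plan is to work throughout with the representation $g(w) = \subbeta \delta\,\EE[y\,\mathbf{1}_{\{y>w\}}] + \delta w F(w)$ and to obtain each property via a short direct calculation. Part (i) is immediate: multiplying the recursive identity in Lemma~\ref{lem:rec-representation} by $\subbeta/\beta$ converts the right-hand side into $g\bigl((\subbeta/\beta) v_{t+1}\bigr)$. For part (ii), Riemann--Stieltjes integration by parts (valid because $z\mapsto z$ is continuous and $F$ has bounded variation) gives, for $w < w'$,
\[
	g(w') - g(w) = (1-\subbeta)\,\delta\int_w^{w'} z\,dF(z) + \delta\int_w^{w'} F(z)\,dz,
\]
which is non-negative whenever $0 \leq w < w'$ since both integrands are then non-negative; this yields monotonicity on $[0,\infty)$. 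Right-continuity of $g$ follows from right-continuity of $F$ and of $w\mapsto\EE[y\,\mathbf{1}_{\{y>w\}}]$ (the latter by applying monotone convergence separately to $y^+$ and $y^-$, using $\{y>w\}\uparrow\{y>w_0\}$ as $w\downarrow w_0$). The same convergence argument applied as $w\uparrow w_0$ gives $\{y>w\}\downarrow\{y\geq w_0\}$ and $F(w_0^-)=F(w_0)-\PP[y=w_0]$, producing the jump size $g(w_0)-g(w_0^-) = (1-\subbeta)\,\delta\, w_0\,\PP[y=w_0]$, which is non-negative for $w_0 \geq 0$ since $\subbeta\leq 1$; jumps in that region are therefore upward.

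For part (iii), the pointwise bound $y>w$ on $\{y>w\}$ gives $\EE[y\,\mathbf{1}_{\{y>w\}}] \geq w(1-F(w))$ and hence $g(w) \geq \delta w\bigl[\subbeta + (1-\subbeta) F(w)\bigr]$; for $w<0$ the bracket lies in $[\subbeta,1]$ and so $\delta$ times it is strictly below $1$, yielding $g(w) > w$. Strictness comes either from strictness of the lower bound when $F(w)<1$, or, in the degenerate case $F(w)=1$, directly from $g(w)=\delta w > w$. For the upper tail, $\subbeta\leq 1$ gives $g(w)\leq \delta\,\EE[\max(y,w)] = \delta w + \delta\,\EE[(y-w)^+]$; since $\EE[y^+]<\infty$ is implicit in the well-definedness of continuation values, $\EE[(y-w)^+]\to 0$ as $w\to\infty$, so $g(w)-w\to -\infty$ and $g(w)<w$ for all $w$ large enough. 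Part (iv) then follows because $S=\{w:g(w)\leq w\}$ is non-empty by the upper-tail statement and bounded below by $0$ by the lower-tail statement, so $w^\star=\inf S\geq 0$. Picking $w_n\downarrow w^\star$ with $w_n\in S$ and invoking right-continuity gives $g(w^\star)\leq w^\star$; picking $w_n\uparrow w^\star$---where $g(w_n)>w_n$ by definition of the infimum---gives $g(w^{\star-})\geq w^\star$, and the upward-jump property from (ii) then yields $g(w^\star)\geq g(w^{\star-})\geq w^\star$. For (v), monotonicity on $[0,\infty)$ gives $g(w')\geq g(0)$, while decomposing $\{y>w\}=\{w<y\leq 0\}\cup\{y>0\}$ yields $g(w)-g(0) = \subbeta\delta\,\EE[y\,\mathbf{1}_{\{w<y\leq 0\}}] + \delta w F(w)$, whose two summands are non-positive ($y\leq 0$ on the first event, and $w<0$, $F(w)\geq 0$ on the second), so $g(w)\leq g(0)\leq g(w')$.

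The main obstacle is the interplay in part~(iv) between right-continuity (used for the upper bound $g(w^\star)\leq w^\star$) and the upward-jump structure at $w^\star$ (used for the matching lower bound): without $\subbeta\leq 1$ a downward jump at $w^\star$ could in principle leave $g(w^\star)<w^\star$, so the fact that part~(iii) forces $w^\star\geq 0$---placing us exactly in the regime where the jump-size calculation from (ii) carries the right sign---is doing the essential work.
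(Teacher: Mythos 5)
Your proof is correct and establishes all five parts; the overall architecture (deriving (iv) from the lower/upper tail bounds in (iii) together with the continuity and jump structure from (ii)) is the same as the paper's, but several of the individual arguments take a different route. The paper's central device is the rewriting $g(w)=\subbeta\,\delta\int\max\{z,w\}\,dF(z)+(1-\subbeta)\,\delta\,wF(w)$, from which it reads off (ii) (both summands non-decreasing on $[0,\infty)$, the second right-continuous with upward jumps since $F$ is a CDF), the lower bound in (iii) (the $\max$-integral is bounded below by $w$), and (v) (a direct computation of $g(w')-g(w)$ split over the events $\{z\le w\}$, $\{w<z\le w'\}$, $\{z>w'\}$). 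You instead obtain (ii) by Riemann--Stieltjes integration by parts, which gives the clean increment formula $(1-\subbeta)\delta\int_w^{w'}z\,dF+\delta\int_w^{w'}F\,dz$ and, as a bonus, the exact jump size $(1-\subbeta)\delta\,w_0\,\PP[y=w_0]$; for the upper tail in (iii) you show $g(w)-w\to-\infty$ via $\EE[(y-w)^+]\to 0$, where the paper computes $\lim_{w\to\infty}g(w)/w=\delta<1$ and argues by contradiction; and for (v) you route the comparison through $g(0)$ rather than comparing $w'$ and $w$ directly. Your treatment of (iv) is in fact more careful than the paper's: the paper asserts that $g$ ``has only upward jumps'' without qualification, whereas your jump formula shows this holds only for $w_0\ge 0$ (at a negative atom the jump is downward when $\subbeta<1$); your argument correctly observes that (iii) forces $w^\star\ge 0$, so only the sign of the jump at non-negative points is needed, which clarifies a step the paper compresses. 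Nothing substantive is gained or lost; your version makes more explicit where $\subbeta\le 1$ and $w^\star\ge 0$ enter.
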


\noindent {\bf Proof of Lemma \ref{lem:prop-g}:} $i)$ follows immediately from Lemma \ref{lem:rec-representation}. To see that $ii)$ holds, observe that we can rewrite $g$ as 
\begin{align}
	g(w) &= \subbeta\, \delta \int_{w}^\infty z \,d F(z) +  \subbeta F(w) \, \delta \, w + (1-\subbeta) F(w) \, \delta \, w \nonumber \\
	&= \subbeta\, \delta \int_{-\infty}^\infty \max \{ z, w\} \,d F(z) +  (1-\subbeta) F(w) \, \delta \, w \label{eq:alt-representation}\,.
\end{align}
Note that both the first and the second summand are non-decreasing for $w \geq 0$, and that the first summand is continuous in $w$ while the second is right-continuous and has only upward jumps as $F$ is a CDF.

To see that $iii)$ holds, observe that the integral in the first summand of \eqref{eq:alt-representation} is bounded from below by $w$ and, thus, for $w<0$ 
\[
	g(w) \geq \subbeta\, \delta \, w +  (1-\subbeta) F(w) \, \delta \, w = \delta w - (1-\subbeta)(1-F(w))\,\delta\,w \geq \delta w> w \,.
\]
For establishing the second part, note that
\begin{align*}
\lim_{w \nearrow \infty}\frac{g(w)}{w} &=\lim_{w \nearrow \infty}  \left \{ \subbeta\, \delta \int_{-\infty}^\infty \max \left\{ \frac{z}{w}, 1\right\} \,d F(z) +  (1-\subbeta) F(w) \, \delta  \right\} = \subbeta\, \delta +  (1-\subbeta) \, \delta  = \delta\, <1. 
\end{align*}
We next argue that this implies that there exist a $\bar{w}$ such that for all $w> \bar{w}$, $g(w) <w$. Suppose otherwise, then there exists a sequence $w_k \nearrow \infty$ such that $g(w_k) >w_k$. Furthermore, for this sequence $\lim_{w_k \nearrow \infty} \frac{g(w_k)}{w_k} >1$, a contradiction.

We now show $iv)$. Observe that since $g$ has only upward jumps, $w\mapsto g(w)-w$ has only upward jumps. Because by $iii)$  the set $\{ w \in \RR \colon g(w) \leq w\}$ is non-empty, the fact that $w\mapsto g(w)-w$ has only upward jumps implies that $w^\star = \inf \{ w \in \RR \colon g(w) \leq w\}$ satisfies $g(w^\star)=w^\star$.  Furthermore, it follows immediately from $iii)$ that the set $\{ w \in \RR \colon g(w) \leq w\}$ contains only $w \geq 0$, and hence that $w^\star \geq 0$.

To show $v)$, note that for $0 \geq  w$ Equation \ref{eq:alt-representation} together with $w' \geq 0$ implies that
\begin{align}
	g(w') -g(w) = & \subbeta\, \delta \left[ \int_{-\infty}^w (w' -w ) dF(z) + \int_w^{w'} (w' -z) dF(z) \right]  \nonumber \\
	& + (1- \subbeta) \delta \left[ F(w') w' - F(w) w \right] \geq 0,\nonumber
\end{align}
where the inequality follows from the facts that $w' \geq 0$ and $w \leq 0$. \qed
\bigskip

%
%
\noindent {\bf Proof of Theorem \ref{prop:monotone-values}:} That statements i) and ii) of the theorem are equivalent is argued in the main text. Here, we prove statement i).

We begin by establishing the result for $\delta<1$. Trivially, Self $T$'s perceived continuation value is $v_T = y \leq 0$. Define $w^\star = \min \{ w \in \RR \colon g(w) \leq w\}\geq 0$, which is well defined by Lemma \ref{lem:prop-g}, $iii) $ and $iv)$. By Lemma \ref{lem:prop-g}, $i)$ and $iv)$,  we have that
\begin{equation}
	 w^\star - \left(\nicefrac{\subbeta}{\beta}\right) \, v_t = g(w^\star) - g\left(\left(\nicefrac{\subbeta}{\beta}\right) \, v_{t+1} \right) \,.
	 \label{eq_something}
\end{equation}
As $v_T = \by \leq 0$ and $w^\star \geq 0$ (by Lemma \ref{lem:prop-g}, $iv$)), we have that  $\left(\nicefrac{\subbeta}{\beta}\right) \, v_T \leq w^\star$. We now proceed by induction to show that this implies that $v_t\leq w^\star$. 
We distinguish two cases: First,$\left(\nicefrac{\subbeta}{\beta}\right) \, v_{t+1} \geq 0$. In this case the monotonicity of $g$ , established in Lemma \ref{lem:prop-g}, $ii)$, together with Equation \ref{eq_something} implies that $\sgn ( w^\star - \, v_{t} \left(\nicefrac{\subbeta}{\beta}\right) ) = \sgn (w^\star - \left(\nicefrac{\subbeta}{\beta}\right) \, v_{t+1})$ and, thus, by induction $\left(\nicefrac{\subbeta}{\beta}\right) \, v_t \leq w^\star$. Second if $\left(\nicefrac{\subbeta}{\beta}\right) \, v_{t+1} < 0$, then by Lemma \ref{lem:prop-g}, $v)$, $g(w^\star) \geq g(\left(\nicefrac{\subbeta}{\beta}\right) \, v_{t+1}) $ and hence it follows from Equation \ref{eq_something} that $\left(\nicefrac{\subbeta}{\beta}\right) \, v_t < w^\star$. We conclude that $\left(\nicefrac{\subbeta}{\beta}\right) \, v_t < w^\star$ for all $t\in\{1,\ldots,T\}$.

Hence, since $\left(\nicefrac{\subbeta}{\beta}\right) \, v_{t+1} \leq w^\star$, we have
\[
\left(\nicefrac{\subbeta}{\beta}\right) \, v_{t+1} \leq g(\left(\nicefrac{\subbeta}{\beta}\right) \, v_{t+1}) = \left(\nicefrac{\subbeta}{\beta}\right) \, v_{t} \Rightarrow v_{t+1} \leq v_{t}\,.
\]

Finally, we establish the result for $\delta=1$. First, note that the right-hand-side of \eqref{eq:def-g} is continuous in $\delta$ and as $\left(\nicefrac{\subbeta}{\beta}\right)\,\,v_{t} = g\left( \left(\nicefrac{\subbeta}{\beta}\right)\,\,v_{t+1} \right)$ by Lemma \ref{lem:prop-g} $i)$, it follows that the continuation values $v_1,\ldots,v_T$ are continuous in $\delta$. Let $v_t^\delta$ be the continuation value in period $t$ as a function of $\delta$. We already established that $v_t^\delta - v_{t+1}^\delta \geq 0$ for all $\delta<1$. By continuity, we have that $v_t^1 - v_{t+1}^1 = \lim_{\delta \nearrow 1} v_t^\delta - v_{t+1}^\delta \geq 0$.
\qed

\bigskip

\noindent {\bf Proof of Theorem \ref{thm:non-identifiability-sophisticate}:} Fix a non-decreasing sequence of stopping probabilities $0 < p_1 \leq p_2 \leq \ldots \leq p_T < 1$, $(\delta,\beta) \in (0,1] \times (0,1]$, and a penalty $\nicefrac{\by}{\beta \delta} \in \RR$. We will construct a distribution $F$ that implies the stopping probabilities $p$ for a sophisticate.

Pick any perceived first-period cutoff $c_1 >0$ such that
\[
	c_1 > \max \left\{0, -(1 - \beta) \,\delta  \by  \,\,\frac{1-(\delta \, \frac{1-p_T}{2})^{T-1}}{(1-(\delta \, \frac{1-p_T}{2})) (\delta \, \frac{1-p_T}{2})^{T-1}} \right\}.
\]

Using $\subbeta=\beta$ in Lemma \ref{lem:rec-representation}, the perceived continuation values satisfy
\begin{equation}
	v_t = \begin{cases} \beta \,\delta \int_{v_{t+1}}^\infty z \,d F(z) + F(v_{t+1}) \, \delta \, v_{t+1} & \text{ for }t \in \{1,\ldots,T-1\}\\
	\by &\text{ for } t=T\end{cases}\,.
\end{equation}
Let $F$ be the sum of $T+2$ Dirac measures
\begin{align}\label{eq:def-F-sophisticate}
F(x; v) &= \sum_{k=0}^{T+1} f_k\,  \mathbf{1}_{\pi_k(v) \leq x}  ,
\end{align} 
at the mass points $\pi_0,\ldots,\pi_{T}$ satisfying
\begin{equation*}
\pi_k(v) = \begin{cases} \underline{y} - c_1 &\text{ if } k =0\\
\underline{y}  &\text{ if } k =1\\
v_{T-k+1}  &\text{ if } k  \in \{ 2,\ldots,T\}
\end{cases}\,.
\end{equation*}
Let the probability of each mass point be given by
\begin{equation*}
	f_k = \begin{cases} (1-p_T)/2 &\text{ if } k = 0,1\\
p_{T-k+2} - p_{T-k+1} &\text{ if } k \in \{ 2,\ldots,T\}\\
 p_1  &\text{ if } k = T+1
\end{cases}\,.
\end{equation*}
Note that $f_0>0$ as $p_T<1$. Since the mass points of $F$ are exactly at the continuation values, we get that for $t \in \{1,\ldots,T-1\}$ the recursive equation for the continuation values $v$ simplifies to a recursive equation for the mass points $\pi$; i.e.
\begin{align} 
	\pi_{T+1-t} &= \beta \,\delta \int_{v_{t+1}}^\infty z \,d F(z) + F(v_{t+1}) \, \delta \, v_{t+1} = \beta \,\delta \sum_{j=T-t+1}^{T+1} f_j \pi_j + \delta \, \left( \sum_{j=0}^{T-t} f_j \right)  \pi_{T-t}  \label{eq:cont_value_mass} \\
	\Rightarrow \pi_k &= \beta \,\delta \sum_{j=k}^{T+1} f_j \pi_j + \delta \,\left( \sum_{j=0}^{k-1} f_j \right) \pi_{k-1} \hspace{3mm}\text{ for } k \in \{2,\ldots,T\} \label{eq:pik}\,. 
\end{align}
We furthermore restrict attention to distributions for which Equation \eqref{eq:cont_value_mass} is also satisfied for $t=T$, i.e. for which $\pi_1$ satisfies Equation \eqref{eq:pik} evaluated at $k=1$. In that case, \eqref{eq:pik} implies that for $k \in \{2,\ldots,T\}$, 
\begin{align}\label{eq:pi-forward}
	\left( \pi_k - \pi_{k-1} \right) &= (1 - \beta) \,\delta f_{k-1} \pi_{k-1} + \delta \,\sum_{j=0}^{k-2} f_j  \left( \pi_{k-1} - \pi_{k-2} \right) \,.
\end{align}
As \eqref{eq:pi-forward} can be solved forward and $\pi_0,\pi_1$ are known, we can use it to determine $\pi_2 , \ldots, \pi_T$. Given the values $\pi_0, \ldots, \pi_T $, we can determine $\pi_{T+1}$ by solving \eqref{eq:pik} for $k=T$
\begin{align*}
	\pi_T &= \beta \delta (f_T \pi_T + f_{T+1} \pi_{T+1}) + \delta \pi_{T-1} \left( \sum_{j=0}^{T-1} f_j  \right)\,. 
\end{align*}
Denote this solution by $\pi^\star$. If $\pi^\star$ is strictly increasing then the distribution defined in \eqref{eq:def-F-sophisticate} has mass points exactly at the continuation values $v$ and leads to the given stopping probabilities $p$. 

We are thus left left to show that the resulting solution $\pi_0^\star, \pi_1^\star, \ldots, \pi_{T+1}^\star$ is increasing. We will show that $\pi_k^\star - \pi_{k-1}^\star >0$ by induction for $k \in \{1,\ldots,T\}$. $\pi_0^\star < \pi_1^\star$ by construction as $c_1 > 0$. We next do the induction step and assume that $\pi^\star_0 < \pi_1^\star < \ldots < \pi_{k-1}^\star$. Since for $k \geq 2$ one has $\pi^\star_{k-1} > \pi_1^\star = \by$,  \eqref{eq:pi-forward} implies that
\begin{align}
	\left( \pi_k - \pi_{k-1} \right) &\geq (1 - \beta) \,\delta  \by + \delta \, f_0  \left( \pi_{k-1} - \pi_{k-2} \right) \nonumber \\
	&= \alpha + \gamma \left( \pi_{k-1} - \pi_{k-2} \right)\,,
\end{align}
where $\alpha=(1 - \beta) \,\delta  \by$ and $\gamma = \delta \, f_0 \in (0,1)$. Since for $\by\geq 0$, we have  $\alpha \geq 0$, it follows that $\pi$ is non-decreasing in this case. We are left to show the result for $\by<0$ and $\alpha<0$. This implies that
\begin{align}
	\left( \pi_k^\star - \pi_{k-1}^\star \right) &\geq \alpha \sum_{j=0}^{k-2} \gamma^j + \gamma^{k-1} \left( \pi_1^\star - \pi_{0}^\star \right) \nonumber \\
	&= \alpha \frac{1-\gamma^{k-1}}{1-\gamma} + \gamma^{k-1} c_1 \nonumber \\
	&\geq \alpha \frac{1-\gamma^{T-1}}{1-\gamma} + \gamma^{T-1} c_1 \nonumber \\
	&= \gamma^{T-1} \left( c_1 - |\alpha| \frac{1-\gamma^{T-1}}{(1-\gamma) \gamma^{T-1}} \right) > 0\,.
\end{align}
The last inequality here follows from our choice of $c_1$. We thus have shown that $\pi_0^\star < \pi_1^\star < \ldots < \pi_T^\star$. It is left to show that $\pi_T^\star < \pi_{T+1}^\star$. By chosing $c_1$ large enough, we can without loss of generality assume that $\pi_T^\star>0$. If $\pi_T^\star > 0$ and $\pi_{T+1}^\star \leq \pi_T^\star$, we have that
\begin{align*}
	\pi_T^\star &= \beta \delta (f_T \pi_T^\star + f_{T+1} \pi_{T+1}^\star) + \delta \pi_{T-1}^\star \left( \sum_{j=0}^{T-1} f_j  \right)\\
	& \le \pi_T^\star \beta \delta f_T  + \pi_{T}^\star\beta \delta f_{T+1}  +  \pi_{T-1}^\star \delta  \left( \sum_{j=0}^{T-1} f_j  \right) \\
	\Leftrightarrow 1 & \le \beta \delta f_T + \beta \delta f_{T+1} + \frac{ \pi_{T-1}^\star}{\pi_{T}^\star} \delta  \left( \sum_{j=0}^{T-1} f_j  \right)\,.
\end{align*}
As $f_T+ f_{T+1}+ \left( \sum_{j=0}^{T-1} f_j  \right)=1$, $f_0 >0$, and $\pi_{T-1}^\star <\pi_{T}^\star$, this is a contradiction and completes the proof. \qed

\bigskip

\noindent {\bf Proof of  Lemma \ref{lem:aux-properties-naive}:}
\textbf{ $i)$:}  By Theorem \ref{prop:monotone-values}, the subjective continuation vales are weakly decreasing. For the sake of a contradiction, suppose the subjective continuation value is constant across two periods.
 Denote by $\underline{m}=\min \big(\text{supp}\, F \big)$ the left end-point of the support of $F$. By assumption $\underline{m}\leq \by   <  0$. By Lemma \ref{lem:prop-g} $i)$, we have that $\nicefrac{v_t}{\beta}=g(\nicefrac{v_{t+1}}{\beta})$ for all $t\in \{1,\ldots,T-1\}$, where,  by Equation \ref{eq:alt-representation}, $g(x)=\delta \int_{-\infty}^\infty \max\Big\{z,x\Big\} \,d F(z)$.  Note that $g$ is non-decreasing, strictly increasing for all $x\geq \underline{m}$, and that $g(x)=\delta \int_{-\infty}^\infty z \,d F(z)\geq \delta \underline{m} > \underline{m}$ for $x<\underline{m}$. Suppose that $v_{t-1}=v_{t}$ for some $t\in \{2,\ldots,T\}$. This implies that $\nicefrac{v_t}{\beta}=g(\nicefrac{v_{t+1}}{\beta})=g(\nicefrac{v_{t}}{\beta})$.  Hence, $\nicefrac{v_t}{\beta}  > \underline{m}$ and as $g$ is strictly increasing for $x\geq \underline{m}$, there can not exist a $\tilde{v}\neq v_t$ such that $\nicefrac{v_t}{\beta}=g(\nicefrac{\tilde{v}}{\beta})$. Hence,  $v_s = v_t$ for all $s,t \in \{1,\ldots,T\}$. As $v_T = \by$, this implies that $v_t = \by$ for all $t$. By Lemma \ref{lem:prop-g} $iii)$, however, any fixed point of $g$ is non-negative, so that $\bar{y}\geq 0$, contradicting the assumption that $\bar{y} < 0$.\\

\noindent We now show $ii)$: Let $v$ be the continuation values associated with $F$ and $\tilde{v}$ the continuation values associated with $\tilde{F} \prec_{FOSD} F$. We want to show that $v_t \geq \tilde{v}_t$ for every $t\in \{1,\ldots,T\}$. We show the result by backward induction over $T$. The start of the induction is that $v_T = \tilde{v}_T =\by$. In the induction step, we show that $v_{t+1}\geq \tilde{v}_{t+1}$ implies $v_t\geq \tilde{v}_t$ 
\begin{align*}
	\nicefrac{v_t}{\beta}&=\delta \int_{-\infty}^\infty \max\Big\{z,\nicefrac{v_{t+1}}{\beta}\Big\} \,d F(z) \geq  \delta \int_{-\infty}^\infty \max\Big\{z,\nicefrac{\tilde{v}_{t+1}}{\beta}\Big\} \,d F(z) \\
	&\geq \delta \int_{-\infty}^\infty \max\Big\{z,\nicefrac{\tilde{v}_{t+1}}{\beta}\Big\} \,d \tilde{F}(z) = \nicefrac{\tilde{v}_t}{\beta} \,.  \ \ \ \ \ \ \ \ \ \ \qed
\end{align*}

\noindent We are now ready to prove Theorem \ref{thm:non-identifiability}.

\bigskip

\noindent {\bf Proof of Theorem \ref{thm:non-identifiability}:}
Let $G_{a,b}(x) = \max \{ \min \left\{ \frac{x - a}{b-a}, 1 \right\} ,0 \}$ be the uniform CDF on $[a,b]$ for $a<b$ and a Dirac measure $G_{a,a}(x)=\mathbf{1}_{a\leq x}$ for $a=b$. Fix some $c_1,c_2 > 0$.
Consider a  non-decreasing sequence of  stopping probabilities $0 < p_1 \leq \ldots \leq p_T < 1$ and for every non-increasing sequence of continuation values $v_1 \geq \ldots \geq v_{T-1}$ with $ v_{T-1} \geq \underline{y}$, define the function $F$
\begin{align*}
F(x; v) &= \sum_{k=0}^T f_k\,  G_{\pi_k(v),\pi_{k+1}(v)} (x) ,
\end{align*}
where
\begin{equation*}
\pi_k(v) = \begin{cases} \underline{y} - c_1 &\text{ if } k =0\\
\underline{y}  &\text{ if } k =1\\
v_{T-k+1}  &\text{ if } k  \in \{ 2,\ldots,T\}\\
v_{1} + c_2  &\text{ if } k = T+1
\end{cases}\,,
\end{equation*}
and
\begin{equation*}
	f_k = \begin{cases} 1-p_T &\text{ if } k = 0\\
p_{T-k+1} - p_{T-k} &\text{ if } k \in \{ 1,\ldots,T-1\}\\
p_1  &\text{ if } k = T
\end{cases}\,.\
\end{equation*}

\noindent \textsl{$F$ is a distribution:} We begin by showing that $F$ is a cumulative distribution function. Note that $f_k\geq 0$ and that for $k<T$,
\begin{equation}\label{eq:sum-f}
\sum_{j=0}^k f_j = 1-p_{T-k}
\end{equation} and $\sum_{j=0}^T f_j = 1$. For every $v$, the function 
$F(\cdot;v)$ is non-decreasing and non-negative as the CDF $G$ is non-decreasing and non-negative. It thus follows that $F$ is a well defined CDF with support $[\pi_0,\pi_{T+1}]=[\underline{y}-c_1, v_1 + c_2]$.\\

\noindent \textsl{Continuation values induced by $F$:} Consider now the continuation values $w$ induced by $F(\cdot;v)$. By Lemma \ref{lem:rec-representation}, they solve the equation
\begin{equation}\label{eq:fixedpoint-representation}
	\frac{w_{t}}{\beta} =   \delta \int_{-\infty}^\infty \max\Big\{z,\frac{w_{t+1}}{\beta}\Big\} \,d F(z;v) \,\,\,\,\,  \text{ for }t \in \{1,\ldots,T-1\}\,,
\end{equation}
with $w_T=\by$. Denote by $L:\RR^{T-1} \to \RR^{T-1}$ the function mapping $v$ to $w$ using \eqref{eq:fixedpoint-representation}. By Theorem \ref{prop:monotone-values}, $w = L(v)$ is non-increasing. As $w$ is non-increasing and $w_T=\underline{y}$, it follows that $(Lv)_t \geq \underline{y}$ for all $t\in \{2,\ldots,T-1\}$. Furthermore, as $\text{supp} F(\cdot;v) \subseteq [\underline{y}-c_1, v_1 + c_2]$
\begin{align*}
	w_1 &= \beta \delta \int_{-\infty}^\infty \max\Big\{z,\frac{w_{t+1}}{\beta}\Big\} \,d F(z;v) \leq \beta \delta \int_{-\infty}^\infty \max\Big\{v_1+c_2,\frac{w_{1}}{\beta}\Big\} \,d F(z;v)\\
	&= \delta \beta \max\Big\{(v_1 + c_2), \frac{w_1}{\beta}\Big\} \leq \delta \beta (v_1 + c_2) \leq \delta (v_1 + c_2) \,.
\end{align*}
Thus, if $v_1 \leq \frac{\delta}{1-\delta} c_2$, we have that 
\[
w_t\leq w_1 \leq \delta  (v_1 + c_2) \leq \frac{\delta}{1-\delta} c_2 \,.
\]
Consequently, $L$ maps $M$ into itself, where $M$ is the set of non-increasing sequences contained in $[\underline{y},\frac{\delta}{1-\delta} c_2]^{T-1}$, i.e.
\[
M=\left\{m \in \Big[\underline{y},\frac{\delta}{1-\delta} c_2\Big]^{T-1}:  m_1 \geq m_2 \geq \ldots \geq m_{T-1} \right\}\,.
\]

\noindent \textsl{Any fixed-point of $L$ induces a solution:} We next argue that if $w^\star \in \RR^{T-1}$ is a fixed point of $L$ then the distribution $F(\cdot ;w^\star)$ induces the stopping probabilities $p$ and thus solves our problem. By Lemma \ref{lem:aux-properties-naive} $i)$, any fixed-point must be strictly decreasing 
$w_1^\star > w_2^\star > \ldots > w_{T-1}^\star$. As $w^\star$ is a fixed point of $L$, the agent stops in period $t$ if and only if $y_t \geq w_t^\star$, which happens with probability
\begin{align*}
\PP[ y > w_t^\star ] &= 1-F(w_t^\star;w^\star) = 1- \sum_{k=0}^T f_k\,  G_{\pi_k(w^\star),\pi_{k+1}(w^\star)} (w_t^\star) = 1- \sum_{k=0}^T f_k\,  \ind{\pi_{k+1}(w^\star)\leq w_t^\star} \\
&= 1- \sum_{k=1}^{T-1} f_k\,  \ind{w_{T-k}^\star \leq w_t^\star} - f_0\ind{\by-c_1 \leq w_t^\star} - f_T \ind{w_1^\star + c_2 \leq w_t^\star}\\
&= 1- \sum_{k=0}^{T-t} f_k\,  = 1- (1-p_t) = p_t \,.
\end{align*}
%
%
Where we used \eqref{eq:sum-f} in the second to last equality. Thus, any distribution associated with a fixed point of $L$ induces the correct stopping probabilities. \\

\noindent \textsl{$L$ has a fixed-point:} It remains for us to argue that $L$ has a fixed point. We note that $M$ is a complete bounded lattice, as the point-wise maximum (minimum) over increasing sequences is increasing.\footnote{To see this note, that $(\by,\ldots,\by)$ is a minimal element and $(\frac{\delta}{1-\delta}c_2,\ldots,\frac{\delta}{1-\delta}c_2)$ is a maximal element. Furthermore, the point-wise infimum and supremum over any subset of $M$ lie in $M$.} We next note that $F$ respects first order stochastic dominance (FOSD), ie. if $v \geqq w$ then $F(\cdot; v)$ is greater than $F(\cdot; w)$ in FOSD.\footnote{We use the notation $\geqq$ for point-wise comparisons.} By Lemma \ref{lem:aux-properties-naive} $ii)$, increasing the distribution of payoffs in FOSD will (weakly) increase the subjective continuation values.
As a consequence $L$ is a monotone operator, i.e. $L (v) \geqq L(w)$ if $v \geqq w$. By Tarski's fixed point theorem, $L$ thus has a fixed point on the lattice $M$.\\

\noindent \textbf{Uniqueness:} Finally, we note that as the subjective continuation values $w^\star$ is strictly decreasing $F(\cdot;w^\star)$ has no mass points. Consequently, the probability that the agent is ever indifferent between stopping and continuing equals zero. Thus, any perception perfect equilibrium leads to the same distribution $p$. \qed

\bigskip

\noindent {\bf Proof of  Lemma \ref{lem:mass_points_sufficient}:} Let the pair $u,G$ solve \ref{eq:constraints-sophisticate}. From now one, fix $u$. Let $\EE_G$ denote the expectation taken with respect to the cumulative distribution function $G$, and $\PP_G$ the probability mass with respect to $G$. 

We now specify a distribution $F$ that has the properties specified in the Lemma. The $T+1$ mass points $(\pi_0, \ldots,\pi_T)$ are located at
\begin{equation*}
	\pi_k = \begin{cases} \EE_G[y|y \leq v_T] &\text{ if } k = 0\\
\EE_G[y|v_{T-k+1} < y \leq v_{T-k}] &\text{ if } k \in \{ 1,\ldots,T-1\}\\
 \EE_G[y|v_1 < y]  &\text{ if } k = T
\end{cases}\,.\
\end{equation*}
and their probability mass is given by $f_k$ as specified in the Lemma. Observe that by construction, we have
$$
\pi_0 \leq v_T < \pi_1 \leq v_{T-1} <  \ldots \leq \pi_{T-1} \leq v_1 < \pi_T.
$$ 

Since $G$ solves \ref{eq:constraints-sophisticate} and $1 - F(v_t) = p_t$ for all $t\in \{1, \ldots,T \}$ by construction, one has 
$$
1 - F(v_t) = 1-G(v_t) \ \ \forall t \in \{1, \ldots, T\}.
$$
Furthermore,
\begin{align}
\nonumber
\int_{v_{t+1}}^\infty z \,d G(z) & = \sum_{k=T - t}^{T-1} \EE_G[y|v_{T-k+1} < y \leq v_{T-k}] \PP_G[y|v_{T-k+1} < y \leq v_{T-k}] +  \EE_G[y|v_1 < y]   \PP_G[y|v_1 < y] \\ \nonumber
&= \sum_{k=T - t}^{T} f_k \pi_k\\ \nonumber
& = \int_{v_{t+1}}^\infty z \,d F(z) \,. \nonumber
\end{align}
Thus, since $u,G$ solve \ref{eq:constraints-sophisticate} so do $u,F$. \qed

\bigskip

\noindent {\bf Proof of  Theorem \ref{thm:non-para_identification}:} Lemma \ref{lem:mass_points_sufficient} implies for a plausible data set that \eqref{eq:constraints-sophisticate} admits a solution if and only if there exists $\pi \in \RR^{T+1}, f \in \Delta^{T+1}$ and a monotone function $u$ such that
\begin{align}
	v_t &=u(m_t)  \ \  & \forall t \in \{ 1,\ldots, T\}\, ,\\
	\pi_0 \leq &v_T < \pi_1 \leq v_{T-1} <  \ldots \leq \pi_{T-1} \leq v_1 < \pi_T, \label{eq:order-pi}\\
	\sum_{k=T-t}^T\pi_k f_k  &=  \frac{\delta^{-1}\,v_t - (1-p_{t+1})  \, v_{t+1}}{\beta} \ \ \  & \forall t \in \{ 1,\ldots, T-1\}\, ,\label{eq:rec-eq-mass}\\
	\sum_{k=T-t+1}^T f_k &= p_t \ \ , & \forall t \in \{ 1,\ldots, T\}\label{eq:f-equal-p}\,.
\end{align}
Equation \ref{eq:f-equal-p} is equivalent to $f_T = p_1, f_0 = 1-p_T$ and for all $t \in \{2,\ldots,T\}$
\begin{align*}
	p_t - p_{t-1} = \sum_{k=T-t+1}^T f_k - \sum_{k=T-t+2}^T f_k = f_{T-t+1} \,,
\end{align*}
and thus completely determines $f$.
From now on we thus consider $f$ as given.

Equation \ref{eq:rec-eq-mass} for $t=1$ is equivalent to
\[
	\pi_{T-1} f_{T-1} + \pi_{T} f_{T}  =  \frac{\delta^{-1}\,v_1 - (1-p_{2})  \, v_{2}}{\beta} \,\,.
\]
We note that there exists $\pi$ satisfying the above equation and \eqref{eq:order-pi} if and only if
\begin{equation}
v_{2} f_{T-1} + v_1 f_{T}  < \frac{\delta^{-1}\,v_1 - (1-p_{2})  \, v_{2}}{\beta} \,\,.
\end{equation}
That this is necessary follows as \eqref{eq:order-pi} provides a lower bound on $\pi_{T-1}$ and $\pi_T$. Since, $f_T= p_1>0$, this is also sufficient as you can always chose $\pi_T$ arbitrarily large. Rearranging for $\beta$ and plugging in $f$ yields
\begin{equation}\label{eq:beta-at}
	\beta < \frac{\delta^{-1}\,v_1 - (1-p_{2})  \, v_{2}}{v_{2} (p_2 - p_1) + v_1 p_1} \,.
\end{equation}
Next, we consider \eqref{eq:rec-eq-mass} for $t \in \{2,\ldots,T-1\}$. Subtracting \eqref{eq:rec-eq-mass} evaluated at $t-1$ from \eqref{eq:rec-eq-mass} evaluated at $t$ yields
\begin{align*}
\pi_{T-t} f_{T-t} &= \sum_{k=T-t}^T\pi_k f_k - \sum_{k=T-t+1}^T\pi_k f_k =  \frac{\delta^{-1}\,v_t - (1-p_{t+1})  \, v_{t+1}}{\beta} - \frac{\delta^{-1}\,v_{t-1} - (1-p_{t})  \, v_{t}}{\beta}\, ,
\end{align*}
which is equivalent to
$$
\pi_{T-t} = \frac{v_{t+1} (p_{t+1} - p_t) - \delta^{-1} (v_{t-1} - v_t) + (1-p_t) (v_t - v_{t+1})}{\beta (p_{t+1} - p_t)}.
$$
The above equation admits a solution satisfying \eqref{eq:order-pi} if and only if for all $t \in \{2, \ldots, T-1\}$, $v_{t+1} <\pi_{T-t} \leq v_t$. Rewriting using the definition of $a(\delta, t)$ from the statement of the theorem, \ref{eq:rec-eq-mass} admits a solution satisfying \eqref{eq:order-pi} if for all $t \in \{2, \ldots, T-1\}$ both $ v_{t+1} \beta  <  v_{t+1} a(\delta,t) $ and $v_{t} \beta  \geq v_{t+1} a(\delta,t)$, and in addition
\begin{equation}\label{eq:beta-at}
	\beta < \frac{\delta^{-1}\,v_1 - (1-p_{2})  \, v_{2}}{v_{2} (p_2 - p_1) + v_1 p_1} \,. 
\end{equation} 
This completes the proof.\qed

\renewcommand{\baselinestretch}{1} \normalsize

\bibliography{../biblio/economics}
\bibliographystyle{../packages/aer}

\section*{Supplementary Appendix: Example \ref{ex:tax-sophisticate}. } 

This appendix establishes the claims made in Example \ref{ex:tax-sophisticate}. 

As the immediate payoffs of task-completion in the support of $F$ are always positive, an active agent will complete the task in the final period. Furthermore, observe that the agent will complete the task in the penultimate period even for the low benefit draw since 
 $$
 \frac{1}{4} > \beta \left\{\frac{19}{16}\right\};
 $$
 that is, the payoff of stopping immediately is greater than the discounted expected payoff of stopping in the final period. Hence, the subjective continuation value in period $t=1$ is the discounted expected value of always stopping in period 2, i.e.
 $$
 v_1 = \beta \left\{\frac{19}{16}\right\}.
 $$
 Given the continuation value $v_1$, the agent will complete the task in the first period for both a high and a low payoff realization. Hence,---for simplicity\footnote{Our payoff comparison between the case with and without a tax is unaffected by whether or not Self 0 applies $\beta$ to discount future payoffs.} following the usual convention and presuming the long-run Self 0 does not use $\beta$ to discount future payoff---Self 0's expected benefit from facing the stopping problem is $\nicefrac{19}{16}$.

 Suppose now that the agent faces a tax of (1/8) that she must pay when completing the task. Hence, she faces a new (after tax) payoff distribution $G$ that with probability $3/4$ pays $11/8$ and with probability $1/4$ pays $1/8$. Again, an active agent will complete the task in the final period as payoffs of doing so are always positive. Furthermore, when drawing the high payoff, an agent will always complete the task immediately.
 
In the penultimate period, the agent will not complete the task when drawing a low payoff of $1/8$ since
 $$
 \frac{1}{8} < \beta \left\{\frac{3}{4} \times \frac{11}{8} + \frac{1}{4} \times \frac{1}{8} \right\} = \frac{1}{8} \left\{ \frac{17}{16}\right\};
 $$
 that is the payoff of stopping immediately is lower than the discounted expected payoff of completing the task in the final period. 
 
Taking the behavior in the penultimate period into account, the agent's first-period continuation value is
 $$
 v_1^G = \beta \left\{ \frac{3}{4} \times \frac{11}{8} + \frac{1}{4} \times \frac{17}{16}\right\} = \beta \left\{\frac{83}{64} \right\} >\beta \left\{\frac{19}{16}\right\} = v_1.
 $$
This already establishes that the first period continuation value is higher with a tax than without it. Furthermore, when facing the tax, the agent will not complete the task in the first period when facing a low payoff since
$$
\frac{1}{8} < \beta \left\{\frac{83}{64} \right\} = \frac{1}{8} \left\{\frac{83}{64} \right\}.
$$
Using this fact, Self 0's expected payoff of facing the problem with a tax is 
$$
\left\{ \frac{3}{4} \times \frac{11}{8} + \frac{1}{4} \times \frac{83}{64}\right\} = \frac{347}{256} >\left\{ \frac{17}{16}\right\}. 
$$
We conclude that the sophisticated agent strictly prefers the tax.

\end{document}